\documentclass[11pt]{article}

\usepackage{mdframed}
\usepackage{comment}
\usepackage{xspace}
 \usepackage{algorithm, algpseudocode}

\usepackage[utf8]{inputenc}
\usepackage{authblk}
\usepackage{color}

\def\showauthornotes{1}
\def\showtableofcontents{0}
\def\showkeys{0}
\def\showdraftbox{0}
\def\showcolorlinks{1}
\def\usemicrotype{0}
\def\showfixme{0}


\usepackage{etex}




\usepackage[utf8]{inputenc}

\usepackage{xspace,enumerate}

\usepackage[]{xcolor}

\usepackage[T1]{fontenc}
\usepackage[full]{textcomp}


\usepackage[american]{babel}


\usepackage{mathtools}


\usepackage{bm}

\usepackage{amsthm}

\newtheorem{theorem}{Theorem}[section]
\newtheorem*{theorem*}{Theorem}

\newtheorem*{proposition*}{Proposition}

\newtheorem*{lemma*}{Lemma}
\newtheorem{corollary}[theorem]{Corollary}
\newtheorem*{conjecture*}{Conjecture}

\newtheorem*{fact*}{Fact}

\newtheorem*{hypothesis*}{Hypothesis}

\theoremstyle{definition}
\newtheorem{definition}[theorem]{Definition}


\usepackage[margin=1in]{geometry}



\usepackage[varg]{pxfonts} 


\ifnum\showkeys=1
\usepackage[color]{showkeys}
\fi

\definecolor{OliveGreen}{rgb}{0,0.6,0}
\ifnum\showcolorlinks=1
\usepackage[
pagebackref,
colorlinks=true,
urlcolor=blue,
linkcolor=blue,
citecolor=OliveGreen,
]{hyperref}
\fi

\ifnum\showcolorlinks=0
\usepackage[
pagebackref,
colorlinks=false,
pdfborder={0 0 0}
]{hyperref}
\fi

\usepackage{prettyref}


\newcommand{\savehyperref}[2]{\texorpdfstring{\hyperref[#1]{#2}}{#2}}

\newrefformat{eq}{\savehyperref{#1}{\textup{(\ref*{#1})}}}
\newrefformat{lem}{\savehyperref{#1}{Lemma~\ref*{#1}}}
\newrefformat{def}{\savehyperref{#1}{Definition~\ref*{#1}}}
\newrefformat{thm}{\savehyperref{#1}{Theorem~\ref*{#1}}}
\newrefformat{cor}{\savehyperref{#1}{Corollary~\ref*{#1}}}
\newrefformat{cha}{\savehyperref{#1}{Chapter~\ref*{#1}}}
\newrefformat{sec}{\savehyperref{#1}{Section~\ref*{#1}}}
\newrefformat{app}{\savehyperref{#1}{Appendix~\ref*{#1}}}
\newrefformat{tab}{\savehyperref{#1}{Table~\ref*{#1}}}
\newrefformat{fig}{\savehyperref{#1}{Figure~\ref*{#1}}}
\newrefformat{hyp}{\savehyperref{#1}{Hypothesis~\ref*{#1}}}
\newrefformat{alg}{\savehyperref{#1}{Algorithm~\ref*{#1}}}
\newrefformat{rem}{\savehyperref{#1}{Remark~\ref*{#1}}}
\newrefformat{item}{\savehyperref{#1}{Item~\ref*{#1}}}
\newrefformat{step}{\savehyperref{#1}{step~\ref*{#1}}}
\newrefformat{conj}{\savehyperref{#1}{Conjecture~\ref*{#1}}}
\newrefformat{fact}{\savehyperref{#1}{Fact~\ref*{#1}}}
\newrefformat{prop}{\savehyperref{#1}{Proposition~\ref*{#1}}}
\newrefformat{prob}{\savehyperref{#1}{Problem~\ref*{#1}}}
\newrefformat{claim}{\savehyperref{#1}{Claim~\ref*{#1}}}
\newrefformat{relax}{\savehyperref{#1}{Relaxation~\ref*{#1}}}
\newrefformat{red}{\savehyperref{#1}{Reduction~\ref*{#1}}}
\newrefformat{part}{\savehyperref{#1}{Part~\ref*{#1}}}


\newcommand{\Sref}[1]{\hyperref[#1]{\S\ref*{#1}}}

\usepackage{nicefrac}



\ifnum\usemicrotype=1
\usepackage{microtype}
\fi

\ifnum\showauthornotes=1
\newcommand{\Authornote}[2]{{\sffamily\small\color{red}{[#1: #2]}}}
\newcommand{\Authornotecolored}[3]{{\sffamily\small\color{#1}{[#2: #3]}}}
\newcommand{\Authorcomment}[2]{{\sffamily\small\color{gray}{[#1: #2]}}}
\newcommand{\Authorstartcomment}[1]{\sffamily\small\color{gray}[#1: }

\newcommand{\Authorfnote}[2]{\footnote{\color{red}{#1: #2}}}
\newcommand{\Authorfixme}[1]{\Authornote{#1}{\textbf{??}}}
\newcommand{\Authormarginmark}[1]{\marginpar{\textcolor{red}{\fbox{\Large #1:!}}}}
\else
\newcommand{\Authornote}[2]{}
\newcommand{\Authornotecolored}[3]{}
\newcommand{\Authorcomment}[2]{}
\newcommand{\Authorstartcomment}[1]{}

\newcommand{\Authorfnote}[2]{}
\newcommand{\Authorfixme}[1]{}
\newcommand{\Authormarginmark}[1]{}
\fi




\ifnum\showfixme=0

\fi

\usepackage{boxedminipage}
















 \usepackage{dsfont}
\usepackage{mathrsfs}





\newcommand{\textparen}[1]{\text{(#1)}}

\ifx\because\undefined
\newcommand{\because}[1]{\textparen{because #1}}
\else
\renewcommand{\because}[1]{\textparen{because #1}}
\fi






















\newcommand\bdot\bullet















\newcommand{\R}{\mathbb R}






\renewcommand{\leq}{\leqslant}

\renewcommand{\geq}{\geqslant}

\ifnum\showdraftbox=1
\newcommand{\draftbox}{\begin{center}
  \fbox{%
    \begin{minipage}{2in}%
      \begin{center}%
          \Large\textsc{Working Draft}\\%
        Please do not distribute%
      \end{center}%
    \end{minipage}%
  }%
\end{center}
\vspace{0.2cm}}
\else
\newcommand{\draftbox}{}
\fi


\let\epsilon=\varepsilon

\numberwithin{equation}{section}



\newcommand{\MYstore}[2]{%
  \global\expandafter \def \csname MYMEMORY #1 \endcsname{#2}%
}

\newcommand{\MYload}[1]{%
  \csname MYMEMORY #1 \endcsname%
}

\newcommand{\MYnewlabel}[1]{%
  \newcommand\MYcurrentlabel{#1}%
  \MYoldlabel{#1}%
}

\newcommand{\MYdummylabel}[1]{}

\newcommand{\torestate}[1]{%
  \let\MYoldlabel\label%
  \let\label\MYnewlabel%
  #1%
  \MYstore{\MYcurrentlabel}{#1}%
  \let\label\MYoldlabel%
}

\newcommand{\restatetheorem}[1]{%
  \let\MYoldlabel\label
  \let\label\MYdummylabel
  \begin{theorem*}[Restatement of \prettyref{#1}]
    \MYload{#1}
  \end{theorem*}
  \let\label\MYoldlabel
}

\newcommand{\restatelemma}[1]{%
  \let\MYoldlabel\label
  \let\label\MYdummylabel
  \begin{lemma*}[Restatement of \prettyref{#1}]
    \MYload{#1}
  \end{lemma*}
  \let\label\MYoldlabel
}

\newcommand{\restateprop}[1]{%
  \let\MYoldlabel\label
  \let\label\MYdummylabel
  \begin{proposition*}[Restatement of \prettyref{#1}]
    \MYload{#1}
  \end{proposition*}
  \let\label\MYoldlabel
}

\newcommand{\restatefact}[1]{%
  \let\MYoldlabel\label
  \let\label\MYdummylabel
  \begin{fact*}[Restatement of \prettyref{#1}]
    \MYload{#1}
  \end{fact*}
  \let\label\MYoldlabel
}

\newcommand{\restate}[1]{%
  \let\MYoldlabel\label
  \let\label\MYdummylabel
  \MYload{#1}
  \let\label\MYoldlabel
}





\let\origparagraph\paragraph
\renewcommand{\paragraph}[1]{\origparagraph{#1.}}


\allowdisplaybreaks


\sloppy



\newcommand{\mc}[1]{\mathcal{#1}}


\newcommand{\hide}[1]{}

\newcommand{\ALGNAME}{ESC-Streaming\xspace}

\setcounter{page}{1}

\title{\bfseries An Efficient Streaming Algorithm for the Submodular Cover Problem}

\author[1]{Ashkan Norouzi-Fard\thanks{email: ashkan.norouzifard@epfl.ch}}
\author[1]{Abbas Bazzi\thanks{email: abbas.bazzi@epfl.ch. The first two authors contributed equally to this work.}}
\author[2]{Marwa El Halabi\thanks{email:marwa.elhalabi@epfl.ch}}
\author[2]{Ilija Bogunovic\thanks{email: ilija.bogunovic@epfl.ch}}
\author[2]{Ya-Ping Hsieh\thanks{email: ya-ping.hsieh@epfl.ch}}
\author[2]{Volkan Cevher\thanks{email: volkan.cevher@epfl.ch}}

\affil[1]{Theory of Computation Laboratory 2 (THL2), EPFL}
\affil[2]{Laboratory for Information and Inference Systems (LIONS), EPFL}


\begin{document}

\begin{titlepage}
\maketitle

\begin{abstract}
We initiate the study of  the classical Submodular Cover (SC) problem in the data streaming model which we refer to as the Streaming Submodular Cover (SSC).  We show that any single pass streaming algorithm using sublinear memory in the size of the stream will fail to provide any non-trivial approximation guarantees for SSC.  Hence, we consider a relaxed version of SSC, where we only seek to find a \emph{partial} cover.  

We design the first \textit{Efficient bicriteria Submodular Cover Streaming} (\ALGNAME) algorithm for this problem, and provide theoretical guarantees for its performance supported by numerical evidence. Our algorithm finds solutions that are competitive with the near-optimal offline greedy algorithm despite requiring only a \emph{single} pass over the data stream. In our numerical experiments, we evaluate the performance of \ALGNAME on active set selection and large-scale graph cover problems. 
\end{abstract}

{\small \textbf{Keywords:}
Submodular Cover, Streaming Algorithms.
}
\end{titlepage}

\draftbox

\clearpage

\ifnum\showtableofcontents=1
{
\tableofcontents
\thispagestyle{empty}
 }
\fi

\clearpage

\section{Introduction}
We consider the \textit{Streaming Submodular Cover (SSC)} problem, where we seek to find the smallest subset that achieves a certain utility, as measured by a monotone submodular function.  The data is assumed to arrive in an arbitrary order and the goal is to minimize the number of passes over the whole dataset while using a memory that is as small as possible.

The motivation behind studying SSC is that many real-world applications can be modeled as cover problems, where we need to select a small subset of data points such that they maximize a particular utility criterion. Often, the quality criterion can be captured by a utility function that satisfies submodularity \cite{tschiatschek2014learning, kim2011distributed, kempe2003maximizing}, an intuitive notion of diminishing returns. 
Despite the fact that the standard \textit{Submodular Cover} (SC) problem is extensively studied and very well-understood, all the proposed algorithms in the literature heavily rely on having access to whole ground set during their execution. However, in many real-world applications, this assumption does not hold. For instance, when the dataset is being generated on the fly or is too large to fit in memory, having access to the whole ground set  may not be feasible. Similarly, depending on the application, we may have some restrictions on how we can access the data. Namely, it could be that random access to the data is simply not possible, or we might be restricted to only accessing a small fraction of it.  
In all such scenarios, the optimization needs to be done on the fly.

The SC problem is first considered by Wolsey~\cite{Wol82}, who shows that a simple greedy algorithm yields a logarithmic factor approximation. This algorithm performs  well in practice and usually returns solutions that are near-optimal. Moreover, improving on its theoretical approximation guarantee is not possible under some natural complexity theoretic assumptions~\cite{Feige,DS14}. However, such an offline greedy approach is impractical for the SSC, since it requires an infeasible number of passes over the stream. 
 
\subsection{Our Contribution} 
In this work, we rigorously show that achieving any \emph{non-trivial} approximation of SSC with a single pass over the data  stream, while using a  \emph{reasonable} amount of memory, is \emph{not} possible.  More generally, we establish an uncontidional lower bound on the trade-off between the memory and approximation ratio for any $p$-pass streaming algorithm solving the SSC problem. 

Hence, we consider instead a relaxed version of SSC, where we only seek to achieve a fraction $(1-\epsilon)$ of the specified utility.  We develop the first \textit{Efficient bicriteria Submodular Cover Streaming} (\ALGNAME) algorithm.
\ALGNAME is simple, easy to implement, and memory as well time efficient. It returns solutions that are competitive with the near-optimal offline greedy algorithm. It requires only a single pass over the data in arbitrary order, and provides for any $\epsilon > 0$, a $2/\epsilon$-approximation to the optimal solution, while achieving a $(1-\epsilon)$ fraction of the specified utility.  
In our experiments, we test the performance of \ALGNAME on active set selection in materials science and graph cover problems. In the latter, we consider a graph dataset that consists of more than $787$ million nodes and $47.6$ billion edges.

\subsection{Related work}\label{sec:RelWork}

Submodular optimization has attracted a lot of interest in machine learning, data mining, and theoretical computer science.  Faced with streaming and massive data, the traditional (offline) greedy approaches fail.  One popular approach to deal with the challenge of the data deluge is to adopt streaming or distributed perspective.  Several submodular optimization problems have been studied so far under these two settings \cite{SG09, ER14,DIMV14, badanidiyuru2014streaming, MKBK15, chekuri2015streaming,  kumar2015fast, CW15,IMV15, SKL16}. 
In the streaming setting, the goal is to find nearly optimal solutions, with a minimal number of passes over the data stream,  memory requirement, and computational cost (measured in terms of oracle queries). 

A related streaming problem to SSC was investigated by Badanidiyuru et al. \cite{badanidiyuru2014streaming}, where the authors studied the streaming Submodular Maximization (SM) problem \emph{subject to a cardinality constraint}. In their setting, given a budget $k$, the goal is to pick at most $k$ elements that achieve the largest possible utility. Whereas for the SC problem, given a utility $Q$, the goal is to find the minimum number of elements that can achieve it.  In the offline setting of cadinality constrained SM, the greedy algorithm returns a solution that is $(1-1/e)$ away from the optimal value \cite{NWF78}, which is known to be the best solution that one can obtain efficiently~\cite{nemhauser1978best}. In the streaming setting, Badanidiyuru et al.~\cite{badanidiyuru2014streaming} designed an elegant single pass $(1/2 - \epsilon)$-approximation algorithm that requires only $O((k\log k)/\epsilon)$ memory. More general constraints for SM have also been studied in the streaming setting, e.g., in \cite{chekuri2015streaming}. 

Moreover, the \textit{Streaming Set Cover} problem, which is a special case of the SSC problem is extensively studied \cite{SG09,ER14,DIMV14,CW15,IMV15,SKL16}. In this special case, the elements in the data stream are $m$ subsets of a universe $\mc{X}$ of size $n$, and the goal is to find the minimum number of sets $k^*$ that can cover all the elements in the universe $\mc{X}$. The study of the  \textit{Streaming Set Cover} problem is mainly focused on the \emph{semi-streaming} model, where the memory is restricted to $\widetilde{O}(n)$\footnote{The $\widetilde{O}$ notation is used  to hide poly-log factors, i.e., $\widetilde{O}(n):= O(n\text{ poly}\{\log n, \log m\})$}. This regime is first investigated by Saha and Getoor~\cite{SG09}, who designed a $O(\log n)$-pass, $O(\log n)$-approximation algorithm that uses $\widetilde{O}(n)$ space. Emek and Ros\'{e}n~\cite{ER14} show that if one restricts the streaming algorithm to perform only one pass over the data stream, then the best possible approximation guarantee is $O(\sqrt{n})$. This lower bound holds even for randomized algorithms. They also designed a deterministic greedy algorithm that matches this approximation guarantee. By relaxing the single pass constraint, Chakrabarti and Wirth \cite{CW15} designed a $p$-pass semi-streaming $(p+1)n^{1/(p+1)}$-approximation algorithm, and proved that this is essentially tight up to a factor of $(p+1)^3$.

\paragraph{\emph{Partial} streaming submodular optimization}
The \textit{Streaming Set Cover} has also been studied from a bicriteria perspective, where one settles for solutions that only cover a $(1-\epsilon)$-fraction of the universe.  Building on the work of~\cite{ER14}, the authors in \cite{CW15} designed a semi-streaming $p$-pass streaming algorithm that achieves a $(1-\epsilon,\delta(n,\epsilon))$-approximation, where $\delta(n,\epsilon) = \min \{ 8p\epsilon^{1/p}, (8p+1)n^{1/(p+1)}\}$. They also provided a lower bound that matches their approximation ratio up to a factor of $\Theta(p^3)$.

\paragraph{Distributed submodular optimization}

Mirzasoleiman et al.~\cite{MKBK15} consider the SC problem in the distributed setting, where they design an efficient algorithm whose solution is close to that of the offline greedy algorithm. Moreover, they study the trade-off between the communication cost and the number of rounds to obtain such a solution. 




To the best of our knowledge, no other works have studied the general SSC problem. We propose the first efficient algorithm \ALGNAME that approximately solves this problem with tight guarantees.

\section{Problem Statement}

\paragraph{Preliminaries} We assume that we are given a utility function $f:2^V \mapsto \R^+$ that measures the quality of a given subset $S \subseteq V$, where $V=\{e_1, \cdots, e_m\}$ is the ground set. The marginal gain associated with any given element $e \in V$ with respect to some set $S\subseteq V$, is defined as follows \[\Delta_f(e|S) := \Delta(e|S) = f(S \cup \{e\}) - f(S).\]
In this work, we focus on \emph{normalized}, \emph{monotone}, \emph{submodular}  utility functions $f$, where $f$ is referred to be: \begin{enumerate}
\item {\bf submodular} if for all $S, T$, such that  $S \subseteq T$, and for all $e \in V \setminus T$, $\Delta(e|S) \geq \Delta(e|T)$.
\item {\bf monotone} if for all $S,T$ such that $S \subseteq T \subseteq V$, we have $f(S) \leq f(T)$.
\item {\bf normalized} if $f(\emptyset) = 0$.
\end{enumerate}
In the standard \textit{Submodular Cover} (SC) problem, the goal is to find the smallest subset $S \subseteq V$ that satisfies a certain utility $Q$, i.e., 
\begin{equation}
\min_{S \subseteq V} |S| \text{ s.t. } f(S) \geq Q \tag{SC}. 
\label{eq:SC}
\end{equation} 
\paragraph{Hardness results} The SC problem is known to be NP-Hard.  A simple greedy strategy \cite{Wol82} that in each round selects the element with the highest marginal gain until $Q$ is reached, returns a solution of size at most $H(\max_e f(\{e\}))k^*$, where $k^*$ is the size of the optimal solution set $S^*$.\footnote{Here, $H(x)$ is the $x$-th harmonic number and is bounded by $H(x) \leq 1+\ln x$.} Moreover, Feige~\cite{Feige} proved that this is the best possible approximation guarantee unless $\textsc{NP} \subseteq \textsc{DTIME}\left(n^{O(\log \log n)} \right)$. This was recently improved to an NP-hardness result by Dinur and Steurer~\cite{DS14}.


\paragraph{Streaming Submodular Cover (SSC)} In the streaming setting, the main challenge is to solve the SC problem while maintaining a small memory and without performing a large number of passes over the data stream. We use $m$ to denote the size of the data stream.  Our first result states that any single pass streaming algorithm with an approximation ratio better than $m/2$, must use at least $\Omega(m)$ memory. Hence, for large datasets, if we restrict ourselves to a single pass streaming algorithm with sublinear memory $o(m)$, we cannot obtain \textit{any} non-trivial approximation of the SSC problem (cf., Theorem~\ref{thm:thm1} in Section~\ref{thb}). To obtain non-trivial and feasible guarantees, we need to relax the coverage constraint in~\ref{eq:SC}. Thus, we instead solve the \textit{Streaming Bicriteria Submodular Cover} (SBSC) defined as follows:
\begin{definition}
\label{def:sbsc}
Given $\epsilon \in (0,1)$ and  $\delta \geq 1$,  an algorithm is said to be a $(1-\epsilon, \delta)$-bicriteria approximation algorithm for the SBSC problem if for any Submodular Cover instance with utility Q and optimal set size $k^*$,  the algorithm returns a solution ${S}$ such that
\begin{equation} \label{eq:SBSC}
f(S) \geq (1-\epsilon) Q \quad \text{and} \quad  |{S}|\leq \delta k^*\,.
\end{equation}
\end{definition}

\section{An efficient streaming submodular cover algorithm}\label{ktomemory}
\paragraph{\ALGNAME algorithm} The first phase of our algorithm is described in Algorithm~\ref{algo:bicriteria2}.  The algorithm receives as input a parameter $M$ representing the size of the allowed memory. The discussion of the role of this parameter is postponed to Section~\ref{thb}. 
The algorithm keeps $t +1= \log(M/2)+1$ \textit{representative} sets.  Each representative set $S_j$ ($j=0,..,t$) has size at most $2^j$, and has a corresponding threshold value $Q / 2^j$.  Once a new element $e$ arrives in the stream, it is added to all the representative sets that are not fully populated, and for which the element's marginal gain is above the corresponding threshold, i.e., $\Delta(e|S_j) \geq \frac{Q}{2^j}$. This phase of the algorithm requires only one pass over the data stream. The running time of the first phase of the algorithm is $O(\log(M))$ for every element of the stream, since the per-element computational cost  is $O(\log(M))$ oracle calls.

In the second phase (i.e., Algorithm~\ref{algo:query}), given a feasible $\tilde{\epsilon}$, the algorithm finds the smallest set $S_i$  among the stored sets, such that $f(S_i) \geq (1-\tilde{\epsilon})Q$.  For any query, the running time of the second phase is $O(\log \log (M))$. Note that after one-pass over the stream, we have no limitation on the number of the queries that we can answer, i.e., we do not need another pass over the stream. Moreover, this phase does not require any oracle calls,  and its total memory usage is at most $M$.

\begin{algorithm}[h!]
\caption{\ALGNAME Algorithm - Picking representative set}
$t = \log(M/2)$
\label{algo:bicriteria2}
\begin{algorithmic}[1]
\State $S_0=S_1=...=S_{t} = \emptyset$
\For{$i=1, \cdots, m$}
\State Let $e$ be the next element in the stream
\For{$j=0, \cdots, t$}
\If {$\Delta(e|S_j) \geq \frac{Q}{2^j}$ and $|S_j| \leq 2^j$}
\State $S_j \leftarrow S_j \cup e$
\EndIf
\EndFor
\EndFor
\end{algorithmic}
\end{algorithm} 
\begin{algorithm}
\caption{ \ALGNAME Algorithm - Responding to the queries}
Given value $\tilde{\epsilon}$, perform the following steps
\label{algo:query}
\begin{algorithmic}[1]
\State Run a binary search on the $S_0,...,S_t$
\State Return the smallest set $S_i$ such that $f(S_i) \geq (1-\tilde{\epsilon})Q$
\State If no such set exists, Return ``Assumption Violated''
\end{algorithmic}
\end{algorithm} 
In the following section, we analyze \ALGNAME and prove that it is an $(1-\tilde{\epsilon},2/\tilde{\epsilon})$-bicriteria approximation algorithm for SSC. Formally we prove the following:  
\begin{theorem}
\label{thm:main3}
For any given instance of SSC problem, and any values $M, \tilde{\epsilon}$, such that $k^*/\tilde{\epsilon}\leq M$, where $k^*$ is size optimal solution to SSC, \ALGNAME algorithm returns a $(1-\tilde{\epsilon}, 2/\tilde{\epsilon})$-approximation solution.
\end{theorem}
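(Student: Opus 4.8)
The plan is to exhibit a single representative set $S_{j^*}$ among $S_0,\dots,S_t$ that simultaneously clears the coverage requirement $f(S_{j^*})\ge(1-\tilde\epsilon)Q$ and obeys the size bound $|S_{j^*}|\le(2/\tilde\epsilon)k^*$; since Algorithm~\ref{algo:query} returns the \emph{smallest} stored set clearing the coverage threshold, its output can only be at least as good. The workhorse is a per-set dichotomy based on whether $S_j$ reached its capacity $2^j$. Every element inserted into $S_j$ passed the test $\Delta(e\mid S_j)\ge Q/2^j$ against the then-current set, so telescoping the accepted marginals (using normalization $f(\emptyset)=0$) gives $f(S_j)\ge|S_j|\cdot Q/2^j$. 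In particular, if at least $2^j$ elements entered $S_j$ then $f(S_j)\ge Q\ge(1-\tilde\epsilon)Q$, handling the ``full'' case outright.

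Suppose instead $|S_j|<2^j$ at termination. Because the set only grows and never exceeded its capacity, the capacity test always passed, so any $e\notin S_j$ was rejected \emph{solely} for failing the threshold at its arrival, i.e.\ $\Delta(e\mid S_j^{\mathrm{arr}})<Q/2^j$; submodularity (marginals shrink as the conditioning set grows) upgrades this to $\Delta(e\mid S_j)<Q/2^j$ for the final set. Comparing against an optimal solution $S^*$ with $|S^*|=k^*$ and $f(S^*)\ge Q$, monotonicity together with a submodular decomposition of $f(S^*\cup S_j)-f(S_j)$ along the elements of $S^*\setminus S_j$ yields
\[
Q \;\le\; f(S^*) \;\le\; f(S^*\cup S_j) \;\le\; f(S_j) + \sum_{e\in S^*\setminus S_j}\Delta(e\mid S_j) \;<\; f(S_j) + k^*\cdot\frac{Q}{2^j},
\]
and hence $f(S_j)>Q\bigl(1-k^*/2^j\bigr)$.

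To finish, let $j^*$ be the smallest index with $2^{j^*}\ge k^*/\tilde\epsilon$, so that also $2^{j^*}<2k^*/\tilde\epsilon$; the memory hypothesis $k^*/\tilde\epsilon\le M$ is exactly what places $j^*$ inside the maintained range $\{0,\dots,t\}$. For this index the full case gives $f(S_{j^*})\ge Q$ and the saturation bound gives $f(S_{j^*})>Q(1-k^*/2^{j^*})\ge(1-\tilde\epsilon)Q$, so coverage holds in either regime, while $|S_{j^*}|\le 2^{j^*}<(2/\tilde\epsilon)k^*$ holds in both. Thus a qualifying set exists, the algorithm never reports ``Assumption Violated'', and the returned set—being no larger than any qualifying set and in particular no larger than $S_{j^*}$—inherits the size bound $\le(2/\tilde\epsilon)k^*$.

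The step I expect to be the main obstacle is the saturation argument in the non-full case: one must argue carefully that the capacity test never blocked an insertion (so that every rejection is purely threshold-based), then chain the arrival-time inequality to the final set via submodularity, and only then sum the marginals over $S^*\setminus S_j$ rather than over all of $S^*$. The secondary delicate point is the index bookkeeping tying $2^{j^*}$ to the memory budget $M$, since the same power of two simultaneously controls the coverage slack $k^*/2^{j^*}$ and the cardinality $2^{j^*}$, and one must check the chosen $j^*$ indeed survives within the $t+1$ retained levels.
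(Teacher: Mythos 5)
Your argument is correct and is essentially the paper's own: the paper proves the analogous Theorem~\ref{thm:main} by choosing the dyadic level $j$ with $2^j\approx 2k^*\ln(1/\epsilon)$, splitting into the full case (where telescoping the accepted thresholds gives $f(S_j)\ge Q$) and the non-full case (where every rejection is threshold-based, so submodularity plus monotonicity give $f(S^*\cup S_j)-f(S_j)< k^*Q/2^j$), and then obtains Theorem~\ref{thm:main3} by the substitution $\tilde\epsilon=1/\ln(1/\epsilon)$; you simply carry out the identical computation directly in terms of $\tilde\epsilon$. The one caveat, which the paper shares and glosses over, is a constant-factor looseness in the index bookkeeping: with $t=\log(M/2)$ your level $2^{j^*}<2k^*/\tilde\epsilon$ is only guaranteed to lie within the maintained range when $k^*/\tilde\epsilon\le M/4$, not merely $\le M$.
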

\section{Theoretical Bounds}\label{thb}
\paragraph{Lower Bound}
We start by establishing a lower bound on the tradeoff between the memory requirement and the approximation ratio of any $p$-pass streaming algorithm solving the SSC problem. 
 \begin{theorem}
  \label{thm:thm1}
For any number of passes $p$ and any stream size $m$, a $p$-pass streaming algorithm that, with probability at least $2/3$, approximates the submodular cover problem to a factor smaller than $\frac{ m^{\frac{1}{p}} }{p+1}$, must use a memory of size at least  $\Omega\left( \frac{  m^{\frac{1}{p}}   }{p(p+1)^2}\right)$.
\end{theorem}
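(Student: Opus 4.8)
The plan is to prove this lower bound through the standard correspondence between passes of a streaming algorithm and rounds of communication: I would exhibit a family of submodular (in fact coverage-type) cover instances on which any $p$-pass algorithm that simultaneously uses little memory and beats the stated approximation would induce a communication protocol that violates a known lower bound. For the underlying hard problem I would take a layered pointer-chasing / tree-pointer-jumping instance with branching factor $b := m^{1/p}$ and depth $p$, so that its total description size is $\Theta(b^p) = \Theta(m)$, matching the stream length. The $p$ layers $L_1,\dots,L_p$ are assigned to $p$ players, and the instance is arranged so that the rounds available when the data is presented in stream order are genuinely insufficient to trace the root-to-leaf pointer path, forcing communication $\Omega(b)=\Omega(m^{1/p})$.

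The embedding into submodular cover uses a coverage function $f(S)=\bigl|\bigcup_{i\in S}A_i\bigr|$, which is automatically normalized, monotone, and submodular. I would design the sets $A_i$ so that the unique correct root-to-leaf path corresponds to a cover of size $k^* = O(p)$ attaining the full utility $Q$, whereas every set $S$ with $f(S)\ge Q$ that does not encode the correct path is forced to be large. Quantitatively, the construction should guarantee that any fully covering set not revealing the path has size exceeding $\tfrac{m^{1/p}}{p+1}\,k^*$. Consequently, an algorithm achieving approximation factor below $\tfrac{m^{1/p}}{p+1}$ returns a covering set of size below this threshold, which must encode the correct path and hence solve the pointer-chasing instance.

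For the conclusion I would partition the stream into the $\Theta(p)$ blocks held by the players and simulate the streaming algorithm by a protocol in which, on each of the $p$ passes, the $M$-bit memory state is transmitted across each of the at most $p+1$ block boundaries, for a total of $O(p^2 M)$ bits. If the algorithm succeeds with probability at least $2/3$ and has approximation factor below $\tfrac{m^{1/p}}{p+1}$, this protocol solves the hard instance with probability $2/3$, so $O(p^2 M)$ must dominate the $\Omega(m^{1/p}/p)$ communication lower bound; tracking the number of boundaries ($\le p+1$) and the $\tfrac1p$ loss in the communication bound yields exactly $M = \Omega\!\bigl(m^{1/p}/(p(p+1)^2)\bigr)$. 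The $p=1$ statement from the introduction (approximation below $m/2$ forces $\Omega(m)$ memory) falls out as a sanity check, where the pointer-chasing instance degenerates to a one-way \textsc{index} reduction.

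The main obstacle is the gap construction. The delicate point is to simultaneously secure (i) a genuinely monotone submodular coverage function, (ii) a small optimum $k^*=O(p)$ supported on the correct path, and (iii) a sharp separation guaranteeing that every set with $f(S)\ge Q$ off the correct path has size above $\tfrac{m^{1/p}}{p+1}k^*$. Calibrating the branching factor $b=m^{1/p}$, the utility threshold $Q$, and the block sizes so that this gap is \emph{exactly} $m^{1/p}/(p+1)$, while ensuring that $p$ passes really do correspond to too few rounds for the pointer-chasing lower bound to collapse, is where the real work lies; the remaining simulation and constant-chasing is routine.
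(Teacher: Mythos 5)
Your architecture is exactly the paper's: reduce from tree pointer jumping, embed it as a coverage instance, and convert a $p$-pass, $M$-memory algorithm into a protocol whose per-round communication is roughly $(p+1)M$, then invoke the $\Omega(t/k^2)$ bound of Chakrabarti--Cormode--McGregor for $(k-1)$-round protocols. The simulation and parameter accounting you sketch are correct and match the paper (with $t=\ell\approx m^{1/p}$ the stream has $m\le 2t^p$ sets, and the gap $\ell/(p+1)$ becomes $m^{1/p}/(p+1)$). One bookkeeping point to fix: you assign $p$ layers to $p$ players, but the reduction needs $k=p+1$ players (one per node-layer of a depth-$p$ tree), since a $p$-pass algorithm yields a $p$-round protocol and the communication bound only applies when the number of rounds is at most $k-1$; with $p$ players and $p$ rounds the pointer-jumping instance is easy.

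The genuine gap is the one you name yourself: the gap instance is asserted, not constructed, and it is the entire technical content of the reduction. The paper's construction is worth internalizing because it resolves your three desiderata simultaneously and is not obvious. Partition a universe $\mc{X}$ of size $\ell t^{k-1}$ into $\ell$-element blocks, one per leaf; let $\tilde{S}_v$ for an internal node be the union of its children's blocks. The player owning layer $i\ge 2$ contributes, for each $v$ in that layer, the \emph{child-excluded} set $S_v=\tilde{S}_v\setminus\tilde{S}_{\pi(v)}$; the leaf player contributes $\tilde{S}_v$ exactly when $b(v)=1$; and the root player additionally contributes singletons for every element of $\tilde{S}_{\pi(r)}$, which keeps the instance feasible in both cases. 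If $b(v_\pi)=1$ the $k=p+1$ sets along the $\pi$-path cover $\mc{X}$; if $b(v_\pi)=0$ the block $\tilde{S}_{v_\pi}$ is disjoint from every non-singleton set in the instance (the path sets exclude it by construction, the off-path sets by disjointness of sibling subtrees), so any cover needs at least $\ell$ singletons. Note also that the protocol does not need the returned cover to ``encode the correct path'': it suffices that the algorithm's size estimate distinguishes optimum $\le p+1$ from optimum $\ge\ell$, which is exactly what an approximation factor below $\ell/(p+1)$ provides. Without an explicit construction certifying this dichotomy, the proposal is a correct plan rather than a proof.
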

We defer the proof of this theorem to Section~\ref{sec:lb}. Note that for $p=1$, Theorem~\ref{thm:thm1} states that any one-pass streaming algorithm with an approximation ratio better than $m/2$ requires at least $\Omega(m)$ memory. Hence, for large datasets, Theorem~\ref{thm:thm1} rules out \textit{any} approximation of the streaming submodular cover problem, if we restrict ourselves to a one-pass streaming algorithm with sublinear memory $o(m)$.  This result motivates the study of the \textit{Streaming Bicriterion Submodular Cover} (SBSC)  problem as in Definition~\ref{def:sbsc}. 

\paragraph{Main result and discussion} Refining the analysis of  the greedy algorithm for  SC~\cite{Wol82}, where we stop once we have achieved a utility of $(1-\epsilon)Q$, yields that the number of elements that we pick is at most $k^*\ln(1/\epsilon)$. This yields a \textit{tight} $(1-\epsilon, \ln(1/\epsilon))$-bicriteria approximation algorithm for the \textit{Bicriteria Submoludar Cover} (BSC) problem. One can turn this bicriteria algorithm into an $(1-\epsilon, \ln(1/\epsilon))$-bicriteria algorithm for SBSC, at the cost of doing $k^* \ln (1/\epsilon)$ passes over the data stream which may be infeasible for some applications. Moreover, this requires $mk^*\ln\left(\frac{1}{\epsilon}\right)$ oracle calls, which may be infeasible for large datasets. 

To circumvent these issues, 
it is natural to parametrize our algorithm by a user defined memory budget $M$ that the streaming algorithm is allowed to use. Assuming, for some $0<\epsilon \leq e^{-1}$, that the $(1-\epsilon, \ln(1/\epsilon))$-bicriteria solution given by the \emph{offline} greedy algorithm's fits in a memory of $M/2$ for the BSC variant of the problem, then our algorithm (\ALGNAME) is guaranteed to return a $(1-1/\ln(1/\epsilon), 2\ln(1/\epsilon))$-bicriteria solution for the SBSC problem, while using at most $M$ memory. 
Hence in only one pass over the data stream, \ALGNAME returns solutions guaranteed to cover, for small values of $\epsilon$, almost the same fraction of the utility as the greedy solution, loosing only a factor of two in the worst case solution size.
Moreover, the number of oracle calls needed by \ALGNAME is only $m \log M$, which for $M=2 k^*\ln(1/\epsilon)$ is bounded by 
\begin{align*}
m \log M = \underbrace{m\log (2 k^* \ln (1/\epsilon))}_{\text{oracle calls by \ALGNAME algorithm}} \ll \underbrace{mk^*\ln\left(\frac{1}{\epsilon}\right)}_{\text{oracle calls by greedy}},
\end{align*}
which is more than a factor $k^*/\log(k^*)$ smaller than the greedy algorithm. This enables \ALGNAME algorithm to perform much faster than the offline greedy algorithm. Another feature of \ALGNAME is that it performs a single pass over the data stream, and after this unique pass, we are able to query a $(1-1/\ln(1/\epsilon'), 2\ln(1/\epsilon'))$-bicriteria solution for any $\epsilon\leq \epsilon' \leq e^{-1}$, without any additonal oracle calls. 
Whenever the above inequality does not hold, \ALGNAME returns ``Assumption Violated''. More precisely, we prove the following theorem:

\begin{theorem}
\label{thm:main}
For any given instance of SSC problem, and any values $M, \epsilon$, such that $2 k^*\ln1/\epsilon \leq M$, where $k^*$ is the optimal solution size, \ALGNAME algorithm returns a $(1-1/(\ln 1/\epsilon), 2 \ln1/\epsilon)$-approximation solution.
\end{theorem}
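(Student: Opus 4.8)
The plan is to obtain Theorem~\ref{thm:main} as an immediate corollary of Theorem~\ref{thm:main3} via a single change of variable, so that none of the streaming analysis has to be redone. Concretely, I would set $\tilde{\epsilon} = 1/\ln(1/\epsilon)$ and invoke Theorem~\ref{thm:main3} with this value of the slack parameter. Under the standing assumption $0 < \epsilon \le e^{-1}$ coming from the surrounding discussion we have $\ln(1/\epsilon) \ge 1$, so $\tilde{\epsilon} \in (0,1]$ and the bicriteria framework of Definition~\ref{def:sbsc} applies without modification.

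The only thing to verify is that the memory hypothesis transfers. The assumption of Theorem~\ref{thm:main} is $2k^*\ln(1/\epsilon) \le M$, which in terms of $\tilde{\epsilon}$ reads $2k^*/\tilde{\epsilon} \le M$, and this trivially implies the hypothesis $k^*/\tilde{\epsilon} \le M$ needed by Theorem~\ref{thm:main3}, with a factor of two to spare that reflects the capacity $2^t = M/2$ of the largest stored set. Theorem~\ref{thm:main3} then guarantees that \ALGNAME returns a $(1-\tilde{\epsilon}, 2/\tilde{\epsilon})$-approximation, and substituting $\tilde{\epsilon} = 1/\ln(1/\epsilon)$ yields $1-\tilde{\epsilon} = 1 - 1/\ln(1/\epsilon)$ and $2/\tilde{\epsilon} = 2\ln(1/\epsilon)$, which is exactly the claimed $(1-1/(\ln 1/\epsilon), 2\ln(1/\epsilon))$-guarantee.

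Since the reduction itself is immediate, the genuine content resides in Theorem~\ref{thm:main3}, which I treat as given; for completeness I would recall its threshold-greedy mechanism specialized to our setting. Let $j^*$ be the smallest index with $2^{j^*} \ge k^*/\tilde{\epsilon}$. The hypothesis $2k^*/\tilde{\epsilon} \le M$ gives $k^*/\tilde{\epsilon} \le M/2 = 2^t$, so $j^* \le t$ and $S_{j^*}$ is indeed maintained. If $S_{j^*}$ reaches its capacity $2^{j^*}$, then each admitted element contributed marginal gain at least $Q/2^{j^*}$, whence $f(S_{j^*}) \ge Q$. Otherwise $S_{j^*}$ is never full, so by submodularity every element outside it, in particular every element of an optimal set $S^*$, has marginal gain below $Q/2^{j^*}$; the chain $Q \le f(S^*) \le f(S_{j^*}) + \sum_{e \in S^* \setminus S_{j^*}} \Delta(e \mid S_{j^*})$ then gives $f(S_{j^*}) \ge Q(1 - k^*/2^{j^*}) \ge (1-\tilde{\epsilon})Q$. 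Either way the coverage constraint holds, and minimality of $j^*$ gives $|S_{j^*}| \le 2^{j^*} < 2k^*/\tilde{\epsilon}$; the binary search of Algorithm~\ref{algo:query} can only return a still smaller qualifying set.

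I expect the reparametrization to be entirely mechanical, so there is no real obstacle at the level of Theorem~\ref{thm:main} itself: the only points needing care are that $\tilde{\epsilon} = 1/\ln(1/\epsilon)$ stays in $(0,1]$ (guaranteed by $\epsilon \le e^{-1}$) and the factor-of-two interplay between the hypothesis $2k^*/\tilde{\epsilon} \le M$ and the top-set capacity $2^t = M/2$. If one insisted on a self-contained argument rather than quoting Theorem~\ref{thm:main3}, the actual obstacle would be the never-full coverage estimate above, which is the one place where monotonicity and the diminishing-returns property are genuinely used.
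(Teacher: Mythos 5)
Your proposal is correct, and its substantive content---choosing the index $j$ with $2^{j}$ just above $k^*/\tilde{\epsilon}$, splitting on whether $S_j$ fills to capacity, and combining monotonicity with submodularity to get $f(S^*\cup S_j)-f(S_j)< k^*Q/2^{j}\le \tilde{\epsilon}Q$---is exactly the paper's own proof, merely reparametrized via $\tilde{\epsilon}=1/\ln(1/\epsilon)$ (your explicit check that $j\le t$ is in fact slightly more careful than the paper's). One caution: the paper obtains Theorem~\ref{thm:main3} as a corollary \emph{of} Theorem~\ref{thm:main} (see the Remarks following the proof), not the reverse, so invoking Theorem~\ref{thm:main3} as ``given'' would be circular; your write-up is saved only because you include the threshold argument yourself, and that direct argument should be presented as the proof rather than as an optional completeness remark.
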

\begin{proof}
Choose $j$ such that $\frac{2^j }{2\ln 1/\epsilon} \leq k^* < \frac{2^{j+1} }{2\ln 1/\epsilon}$. We first show that the number of elements in $S_j$ is at most $2k^*\ln 1/\epsilon$ and $f(S_j) \geq (1-\frac{1}{\ln1/\epsilon})Q$. We know that $|S_j| \leq 2^j$, and hence
\[
|S_j| \leq 2^j \leq 2k^*\ln 1/\epsilon
\]
If $|S_j| = 2^j$ then we get $f(S_j) \geq Q$. Now assume that $|S_j| < 2^j$ and let $S^*$ be the optimum solution, so $|S^*| = k^*$ and $f(S^*) \geq Q$. By monotonicity, we get that $f(S^*\cup S_j) \geq Q$. Let $S^*\backslash S_j = \{w_1,...w_d\}$. We know that the marginal gain that $w_i$ gives to a subset of $S_j$ is less than $\frac{Q}{2^j}$ for all $1 \leq i \leq d$. By submodularity, we get that $\Delta(w_i|S_j) < \frac{Q}{2^j}$ for all $1 \leq i \leq d$. Therefore,
\[
f(S^*\cup S_j) - f(S_j) \leq \sum_{i=1}^{d} \Delta(w_i|S_j) < d \frac{Q}{2^j} \leq d\frac{Q}{k^*\ln 1/\epsilon} \leq \frac{Q}{\ln 1/\epsilon},
\]
which gives
\[
f(S^*\cup S_j) - f(S_j) \leq \frac{Q}{\ln 1/\epsilon} \rightarrow f(S_j) \geq Q\left(1-\frac{1}{\ln 1/\epsilon}\right).
\]
Hence the set $S_j$ is indeed a good candidate solution. Therefore, the algorithm does not return any solution, only if the assumption is not satisfied, i.e., $k^*\ln1/\epsilon > M$. Otherwise, if the algorithm returns a solution $S_{j_1}$, such that $j_1 < j$, then $S_{j_1}$ must satisfy $f(S_{j_1}) \geq (1-\frac{1}{\ln 1/\epsilon})Q$, and it contains fewer elements than $2^j$, then it also satisfies size and utility guarantees. 
\end{proof}


\paragraph{Remarks}

Note that in Algorithm \ref{algo:bicriteria2}, we can replace the constant $2$ by another choice of the constant $1 < \alpha \leq 2$. The representative sets are changed accordingly to $\alpha^j$, and $t = \log_\alpha(M/\alpha)$. Varying $\alpha$ provides a trade-off between memory and solution size guarantee. More precisely, for any $1 < \alpha \leq 2$,  \ALGNAME achieves a $(1-\frac{1}{\ln 1/\epsilon}, \alpha\ln 1/\epsilon)$-approximation guarantee, for instances of SSC where ${\alpha k^*\ln1/\epsilon}\leq M$. However, the improvement in the size approximation guarantee, comes at the cost of increased memory usage $\frac{M -1}{\alpha - 1}$, and increased number of oracle calls $m (\log_\alpha (M/\alpha)+1)$. 

Notice that in the statement of Theorem~\ref{thm:main}, the approximation guarantee of \ALGNAME is given with respect to a memory only large enough to fit the offline greedy algorithm's solution. However, if we allow our memory $M$ to be as large as $k^*/\epsilon$, then Theorem~\ref{thm:main3} follows immediately for $\tilde{\epsilon} = 1/\ln(1/\epsilon)$.

\newtheorem{observation}{Observation}
\newtheorem{claim}{Claim}
\section{Lower Bound}
\label{sec:lb}
In this section, we prove an unconditional lower bound on any $p$-pass streaming algorithm that approximates the streaming submodular cover problem, as stated in Theorem~\ref{thm:thm1}. A common way to prove lower bounds for streaming algorithms is through communication complexity. For our purposes, the suitable communication problem to start with is the Multi-Player Pointer Jumping Problem. 

In what follows, we start by defining the Multi-Player Pointer Jumping Problem and stating its known communication complexity hardness results in Section~\ref{sec:MPJ}. We then present in Section~\ref{sec:reduction} a reduction from the streaming submodular cover problem to the aformentioned communication problem. The proof of Theorem~\ref{thm:thm1} then follows in Section~\ref{sec:actuallb}.
\label{sec:hardnessss}
\subsection{Multi-Player Pointer Jumping Problem}
\label{sec:MPJ}
In the Multi-Player Pointer Jumping Problem, we are given a rooted tree $T$ of depth $\ell \geq 1$, where the nodes in the tree are divided into $k = \ell+1$ layers according to their distance from the root, with the convention that the root is at layer $k $, and the leaves are at layer 1. For $1\leq i \leq k$, we refer to the set of nodes in layer $i$ as $\mc{V}_i$. In this problem, denoted \textsc{MPJ${}_{T,k}$} for a given tree $T$, we have $k$ players $P_1,\dots,P_k$ where each player $i$ is \emph{in charge} of the vertices in layer $i$. An input $\pi$ for \textsc{MPJ${}_{T,k}$} is divided as follows: \begin{itemize}
\item For each $2\leq i \leq k$, and for each $v \in \mc{V}_i$, player $P_i$ gets a pointer indicating one outgoing edge from $v$ to one of its children, say $u$, in layer $i-1$, and we write in this context that $\pi(v) = u$.
\item For each $v \in \mc{V}_1$, player $P_1$ gets a bit $b(v):=b_\pi(v) \in \{0,1\}$.
\end{itemize}
Note that given $\pi$, the tree $T_\pi$ restricted to the edges indicated in $\pi$ contains a unique root-to-leaf path, leading to a unique leaf $v_\pi \in \mc{V}_1$, and hence the output of this problem given an input $\pi$ is  \textsc{MPJ${}_{T,k}(\pi)$}$ = b(v_\pi) \in \{0,1\}$.

This problem is interesting from a communication protocol perspective, where the $k$ players get the input $\pi$ as defined earlier, and broadcast messages in $r$ rounds. In each round, players $P_1,P_2,\dots,P_k$ each send a message, in that order. In this setting, the message in the last round is a single bit, corresponding to their guess about \textsc{MPJ${}_{T,k}(\pi)$}. The goal in this problem then is to figure out this bit using the minimum amount of communication per round. At each round, we think of the players as writing their messages in the order from $P_1$ to $P_k$ on a shared blackboard. Formally speaking, an $[r,C,\epsilon]$-protocol for \textsc{MPJ${}_{T,k}$} is defined as \begin{itemize}
\item The game is played in $r$ rounds.
\item The total number of bits communicated per round is at most $C$.
\item The protocol's output is equal to \textsc{MPJ${}_{T,k}(\pi)$} with probability at least $1-\epsilon$.
\end{itemize}
We quantify the \emph{complexity} of such communication problems by studying their r-round randomised communication complexity $R^r(\textsc{MPJ}{}_{T,k})$ of \textsc{MPJ${}_{T,k}$} as follows: \begin{align*}
R^r(\textsc{MPJ}{}_{T,k}) = \min \{C: \text{there exists a }[r,C,1/3]\text{-protocol for }\textsc{MPJ}{}_{T,k} \}
\end{align*}

A result in communication complexity shows that if the number of players is $k$, then the problem requires a large number of communicated bits if we restrct ourselves to $r$-rounds protocol for $r<k$. Formally, the authors of~\cite{chakrabarti2008robust} prove the following: \begin{theorem}
\label{thm:COMhard}
Let $T$ be complete $t$-ary tree of depth $\ell \geq 1$ (and consequently $k= \ell + 1 \geq 2$ layers). Then \begin{align*}
R^{k-1}(\textsc{MPJ}{}_{T,k}) = \Omega \left(\frac{t}{k^2} \right)
\end{align*}
\end{theorem}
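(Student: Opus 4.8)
The plan is to prove the bound by induction on the number of layers $k$, working in the language of information complexity under a hard product distribution and peeling off one tree layer for each round we are short. Fix the distribution $\mu$ on inputs in which every pointer $\pi(v)$ with $v \in \mc{V}_i$, $i \geq 2$, is uniform over the $t$ children of $v$, and every leaf bit $b(v)$, $v \in \mc{V}_1$, is an independent uniform bit. Under $\mu$ the value $\textsc{MPJ}_{T,k}(\pi) = b(v_\pi)$ is itself a uniform bit, so any $[k-1,C,1/3]$-protocol must reveal $\Omega(1)$ information about the \emph{selected} leaf bit $b(v_\pi)$; the whole difficulty lies in the ``one round short'' regime, since $k$ players would need $k$ rounds to route the pointer chain from root to leaf. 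For the base case $k=2$ the tree has depth one: $P_2$ holds a single pointer into one of $t$ leaves while $P_1$ holds all $t$ leaf bits, and a one-round protocol forces $P_1$ to speak before $P_2$ outputs. Thus $P_1$ must commit to a message from which $P_2$ recovers $b(v_\pi)$ without knowing which leaf is wanted, which is exactly \textsc{Index}; its one-way randomized complexity $\Omega(t) = \Omega(t/k^2)$ establishes the base.

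For the inductive step I would peel the top layer. The root pointer, held by $P_k$ (who speaks \emph{last} in each round), selects one of $t$ complete $t$-ary subtrees $T'$ of depth $\ell-1$, each a $\textsc{MPJ}_{T',k-1}$ instance on $P_1,\dots,P_{k-1}$. Two ingredients combine. First, a direct-sum observation: the round-one messages of $P_1,\dots,P_{k-1}$ are produced before the root pointer, hence the selected subtree, is ever revealed, so they cannot focus on that subtree and the transcript must stay informative about all $t$ subtrees at once; this is what forbids shortcutting the width $t$ and keeps the base-case $\Omega(t)$ alive through the recursion. Second, a round-elimination step: by an averaging argument I would condition on precisely these ``uninformed'' round-one messages, choosing a fixing under which the conditional input law stays close to $\mu$ (their total length is at most $C$ bits, so a typical fixing perturbs the conditional distribution only slightly in statistical distance). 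Folding this first round into the conditioning consumes exactly one round and leaves a $(k-2)$-round protocol for $\textsc{MPJ}_{T',k-1}$ on a uniformly random subtree, which is the regime covered by the inductive hypothesis. Assembling the two ingredients gives a recursion, schematically of the form
\[
\mathrm{icost}^{\,k-1}_{\mu}\!\left(\textsc{MPJ}_{T,k}\right) \;\geq\; \left(1 - O(1/k)\right)\,\mathrm{icost}^{\,k-2}_{\mu'}\!\left(\textsc{MPJ}_{T',k-1}\right) \;-\; O(1/k),
\]
where the subtracted term is the statistical drift introduced by the conditioning.

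The two factors of $1/k$ in the final bound then have clean origins. The per-level conditioning error must be kept to $O(1/k)$ so that the total error accumulated over the $k-1$ peeling steps stays below the constant $1/3$ budget, which costs one factor of $1/k$; and translating the surviving $\Omega(t)$ information bound into a bound on the \emph{per-round} communication $C$ costs another factor of $\Theta(k)$, since $C$ is charged once per round across $\Theta(k)$ rounds. Together these yield $C = \Omega(t/k^2)$.

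The main obstacle is exactly this round-elimination lemma together with its robust bookkeeping: one must show that conditioning on the first speakers' short messages neither destroys the product structure of $\mu$ nor leaks more than a controlled amount about the designated leaf $v_\pi$, and that the additive errors, summed over all $k$ levels, remain within tolerance. Making this quantitative---typically via Hellinger- or statistical-distance control of the conditioned distribution together with a careful direct-sum decomposition of the information cost---is the technical heart of the argument and the reason the loss is $1/k^2$ rather than $1/k$.
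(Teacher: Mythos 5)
This theorem is not proved in the paper at all: it is imported verbatim from Chakrabarti, Cormode and McGregor \cite{chakrabarti2008robust} and used as a black box to drive the streaming lower bound, so there is no in-paper proof to compare yours against. Judged on its own terms, your proposal correctly identifies the strategy that the cited work (and the pointer-jumping literature generally) uses --- a product hard distribution, a base case equal to \textsc{Index} with one-way complexity $\Omega(t)$, and a round-elimination recursion that conditions on the messages sent by the players who do not yet know which subtree is relevant --- and it correctly exploits the speaking order ($P_1$, who holds the leaf bits, speaks first; $P_k$, who holds the root pointer, speaks last), which is what makes the base case genuinely \textsc{Index}.

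However, as written this is a plan rather than a proof, and the gap is exactly where you say it is. The round-elimination lemma (that a typical fixing of the first-round messages leaves the conditional input distribution on the selected subtree close to product, with drift controlled by the message length relative to the arity $t$) and the direct-sum decomposition over the $t$ subtrees are both asserted, not established; these are the entire technical content of the theorem. Moreover, the quantitative bookkeeping does not yet yield the stated bound: the displayed recursion, unrolled $k-1$ times, telescopes to a lower bound of $\Omega(t)$ on a total information cost minus an $O(1)$ error, which would give $C = \Omega(t/k)$ per round, not $\Omega(t/k^2)$; and your two claimed sources of $1/k$ losses (error budget and per-round amortization) are additive considerations that do not obviously multiply. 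The actual source of the $k^2$ in arguments of this type is that each elimination step perturbs the conditional distribution by roughly $\sqrt{C/t}$ in statistical distance, so keeping the sum of $k-1$ such drifts below the constant advantage forces $k\sqrt{C/t} = O(1)$, i.e.\ $C = O(t/k^2)$; the contrapositive is the theorem. You would need to replace the heuristic accounting with this (or an equivalent Hellinger-distance) argument, and prove the conditioning lemma it relies on, before the sketch becomes a proof. Given that the paper simply cites \cite{chakrabarti2008robust}, the pragmatic fix is to do the same.
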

The \textsc{MPJ${}_{T,k}$} problem will serve as the starting hard communication problem that we use to prove our streaming lower bounds for the submodular cover problem. In what follows, we will prove a lower bound on the required memory of any $p$-pass streaming algorithm with good approximation guarantee for the submodular cover problem. To do that, let $m$ be the number of sets that arrive in the stream, and $p$ be the number of passes that the streaming algorithm is allowed to perform. We will show that if a $p$-pass streaming algorithm can provide an approximation guarantee smaller than $m^{\frac{1}{p}}/p$ for the the set cover problem using a memory less than $\Omega(m^{\frac{1}{p}}/p^3)$, then we can solve the Multi-Player Pointer Jumping Problem on any complete $t$-ary tree with $p+1$ player in $p$ rounds while communicating less than $\Omega(t/p^2)$ bits per round, which yields a contradiction to Theorem~\ref{thm:COMhard}. We formalise this in what follows.
\subsection{Reduction to Submodular Cover}
\label{sec:reduction}
Let $T := T(t,k)$ be a complete $t$-ary tree with $k$ layers over the vertex set $\mc{V}$, where each layer contains vertices $\mc{V}_i \subseteq \mc{V}$ for $1\leq i \leq k$. Note that \begin{align*}
|\mc{V}_i| & = t^{k-i} && \forall i\in \{1,2,\dots, k\} \\
|\mc{V}| & = \sum_{i=1}^k|\mc{V}_i| = \sum_{i=1}^k  t^{k-i} = \frac{t^{k}-1}{t-1}
\end{align*}Without loss of generality, assume the children of every node $v \in \bigcup_{i = 2}^{k}\mc{V}_i$ are ordered, i.e., for every $2\leq i \leq k$, for every $v \in \mc{V}_i$, and for every $1\leq j \leq t$, $c_{j}(v) \in \mc{V}_{i-1}$ denotes the $j^{th}$ child of $v$ in layer $i-1$. 

Given a complete $t$-ary tree $T$ with $k$ layers, we construct a structure $\mc{T}_{t,k,\ell}$ for $\ell \geq t$, such that $\mc{T}$ has the same tree structure as $T$, but we additionally have sets associated with each $v\in \mc{V}$. To do that, let $\mc{X}$ be a universe of elements such that $|\mc{X}| = t^{k-1}\ell$. We partition $\mc{X}$ into $t^{k-1}$ equal sized (i.e., each of size $\ell$) disjoint sets, and assign each such set $\tilde{S}$ to one of the $t^{k-1}$ leafs of $T$ (i.e., the vertices in $\mc{V}_1$). For every vertex $v\in \mc{V}_1$, we denote its corresponding set by $\tilde{S}_v$. This process defines the sets assigned to leaf vertices. We now recursively define the sets associated with the remaining vertices as follows: \begin{itemize}
\item For every $i=2,\dots,k$, and for every $v \in \mc{V}_i$, the set $\tilde{S}_v$ associated with the vertex $v$ is the union of the sets corresponding to its children, i.e., \begin{align*}
\tilde{S}_v = \bigcup_{j=1}^t \tilde{S}_{c_{j}(v)}
\end{align*}
\end{itemize}
Note that with this construction, if we denote by $r$ the root of the tree (i.e., the unique vertex in $\mc{V}_k$), then $\tilde{S}_r = \mc{X}$. In fact, one can easily verify that for any $i\in \{1,2,\dots,k\}$, and any $v \in \mc{V}_i$, our construction satisfies that \begin{align}
\label{eq:setsize}
|\tilde{S}_v| = \ell t^{i-1}
\end{align}
Before we proceed, we record the following easy observation. \begin{observation}
\label{obs:dis}
For any 2 vertices $v,u \in \mc{V}$ such that $v$ is neither the ancestor nor the descendent of $v$, we have that $\tilde{S}_v \cap \tilde{S}_u =\emptyset$.
\end{observation}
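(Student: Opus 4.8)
The plan is to reduce the claim to the disjointness of the leaf-level sets, which holds by construction since the sets $\tilde{S}_w$ for $w \in \mc{V}_1$ partition $\mc{X}$. The key preliminary step is to unfold the recursive definition of $\tilde{S}_v$ into a statement about leaves. For a vertex $v$, let $L(v) := \{w \in \mc{V}_1 : w \text{ lies in the subtree rooted at } v\}$ be its set of leaf-descendants (so $L(v)=\{v\}$ when $v\in\mc{V}_1$). First I would prove, by induction on the layer index $i$ with $v \in \mc{V}_i$, that $\tilde{S}_v = \bigcup_{w \in L(v)} \tilde{S}_w$. The base case $i=1$ is immediate. For the inductive step, the defining relation $\tilde{S}_v = \bigcup_{j=1}^t \tilde{S}_{c_j(v)}$ combined with the inductive hypothesis applied to each child gives $\tilde{S}_v = \bigcup_{j=1}^t \bigcup_{w \in L(c_j(v))} \tilde{S}_w$; since the children's leaf-descendant sets partition $L(v)$, this is exactly $\bigcup_{w \in L(v)} \tilde{S}_w$.

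Next I would invoke the standard tree fact that two incomparable vertices have disjoint sets of leaf-descendants: if neither of $v,u$ is an ancestor of the other (this is the hypothesis of the observation, the statement's reference to $v$ being ancestor/descendant of itself being an evident typo for $u$), then $L(v) \cap L(u) = \emptyset$. Indeed, any common leaf-descendant $w$ would lie in both the subtree rooted at $v$ and the subtree rooted at $u$, so both $v$ and $u$ would lie on the unique root-to-$w$ path, forcing one to be an ancestor of the other and contradicting incomparability.

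Finally I would combine the two facts. Using the unfolding, $\tilde{S}_v \cap \tilde{S}_u = \bigl(\bigcup_{w \in L(v)} \tilde{S}_w\bigr) \cap \bigl(\bigcup_{w' \in L(u)} \tilde{S}_{w'}\bigr) = \bigcup_{w \in L(v),\, w' \in L(u)} (\tilde{S}_w \cap \tilde{S}_{w'})$. Since $L(v)$ and $L(u)$ are disjoint, every pair $(w,w')$ in the union has $w \neq w'$, and distinct leaf-sets are disjoint because they form a partition of $\mc{X}$; hence every term vanishes and $\tilde{S}_v \cap \tilde{S}_u = \emptyset$. There is no substantial obstacle here: the only care needed is to set up the leaf-descendant characterization so that the convention for $v$ being itself a leaf matches the recursive definition exactly, which keeps the inductive bookkeeping clean.
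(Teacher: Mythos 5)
Your proof is correct. The paper records this as an ``easy observation'' and supplies no proof at all, so there is nothing to compare against; your argument --- unfolding the recursion to show $\tilde{S}_v = \bigcup_{w \in L(v)} \tilde{S}_w$, noting that incomparable vertices have disjoint leaf-descendant sets, and invoking the fact that the leaf-level sets partition $\mc{X}$ --- is exactly the natural justification the authors had in mind, and you are also right that the statement's second ``$v$'' is a typo for ``$u$''.
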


Now given an input $\pi$ for $\textsc{MPJ}{}_{T,k}$ over $T:= T(t,k)$, we construct an instance $\mc{I}(\pi)$ of  $\textsc{SMC}_{t,k,\ell}$ using $\mc{T}^{t,k,\ell}$ as follows, where each player out of the $k$ players $P_1,P_2,\dots,P_k$ gets a collection of sets. Namely, \begin{itemize}
\item Player $P_k$ gets the following sets; let $r$ be the root of the tree, and let $u = \pi(r)$ be the chosen child of $v$ according to $\pi$, then $P_k$ gets the set $S_r = \tilde{S}_r \backslash \tilde{S}_{u}$, and a singleton set for each element $e \in \tilde{S}_u$. It follows from (\ref{eq:setsize}) that $P_k$ gets $\ell t^{k-2}+1$ sets.
\item For each $2\leq i \leq k-1$, player $P_i$ gets the following sets: For each $v\in \mc{V}_i$, let $u=\pi(v)$ be the chosen child of $v$ according to $\pi$. Then Player $P_i$ gets for each such vertex $v$ the set $S_v = \tilde{S}_v \backslash \tilde{S}_u$. Hence, for $2\leq i \leq k-1$, player $P_i$ gets in total $t^{k-i}$ sets.
\item Player $P_1$ gets the following sets: For each $v\in \mc{V}_1$, player $P_1$ gets the set $S_v = \tilde{S}_v$ if $b(v)=1$, and nothing from this vertex if $b(v) =0$. Hence $P_1$ gets at most $t^{k-1}$ sets.
\end{itemize}
Let $\mc{S}_i$ be the family of sets that player $P_i$ gets, and $\mc{S}$ be the family of all the sets in our instance, then $m := |\mc{S}|$ is at most \begin{align*}
m\leq \underbrace{t^{k-1}}_{P_1} +  \underbrace{\sum_{i=2}^{k-1} t^{k-i}}_{P_2,\dots,P_{k-1}} + \underbrace{\ell t^{k-2}+1}_{P_k}  = \frac{t^k -1 + \ell t^{k-2}(t-1)}{t-1} \text{ sets.}
\end{align*}
The submodular function $f$ that we consider for $\textsc{SMC}_{t,k,\ell}$ is the cover function defined as follows:  \begin{align*}
f(S) & = |S| && \forall S \in \mc{S} \\ 
f(\tilde{\mc{S}}) & = |\bigcup_{S \in \tilde{\mc{S}}} S | && \forall \tilde{\mc{S}} \subseteq \mc{S} \\ 
\end{align*}Then the goal in this submodular cover problem is to find the minimum cardinality family of sets $\mc{S^*} \subset \mc{S}$ such that \begin{align*}
f(\mc{S}^*) \geq \ell t^{k-1} := Q
\end{align*}
Note that in our construction, we can always achieve the desired value $Q$, no matter whether $\textsc{MPJ}{}_{T,k}(\pi)$ is 0 or 1, as the sets assigned to player $P_k$ are enough to cover $\mc{X}$. In other words, one can easily check that $f(\mc{S}_k) = \ell t^{k-1} = Q$.

In order to see the intuition behind this construction, recall that in the communication problem $\textsc{MPJ}{}_{T,k}$, we were interested in the value of $\textsc{MPJ}{}_{T,k}(\pi) = b(v_\pi)$ given an input $\pi$. We will show in the following claim how to relate the value of  $b(v_\pi)$ to the size of the submodular cover in  the instance $\mc{I}(\pi)$ of  $\textsc{SMC}_{t,k,\ell}$ that we have constructed.
\begin{claim}
\label{clm:gap}
Let $\textsc{MPJ}{}_{T,k}$ be an instance of the Multi-Player Pointer Jumping Problem, and let $\mc{I}(\pi)$ be the resulting submodular cover instance of $\textsc{SMC}_{t,k,\ell}$. Then the following holds: \begin{enumerate}
\item If $b(v_\pi)=1$, then there exists a family of sets $\mc{S}$ such that $|\mc{S}| = k$, and $f(\mc{S}) = Q$.
\item If $b(v_\pi)=0$, then any family of sets $\mc{S}$ of size less than $\ell$ will have $f(\mc{S})<Q$.
\end{enumerate}
\end{claim}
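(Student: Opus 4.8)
The plan is to analyze both cases through the unique root-to-leaf path determined by $\pi$ in the tree. Write this path as $v_k = r,\ v_{k-1} = \pi(v_k),\ \dots,\ v_1 = v_\pi$, where $v_i \in \mc{V}_i$ and each $v_{i-1}$ is the child of $v_i$ selected by $\pi$. The first structural fact I would record is that the subtree-sets along the path are nested, $\tilde{S}_{v_1} \subseteq \tilde{S}_{v_2} \subseteq \cdots \subseteq \tilde{S}_{v_k} = \mc{X}$, which is immediate from the recursive definition $\tilde{S}_v = \bigcup_{j} \tilde{S}_{c_j(v)}$. Consequently the sets the players hold along the path, namely $S_{v_i} = \tilde{S}_{v_i} \setminus \tilde{S}_{v_{i-1}}$ for $2 \leq i \leq k$, telescope: their union is exactly $\tilde{S}_{v_k} \setminus \tilde{S}_{v_1} = \mc{X} \setminus \tilde{S}_{v_1}$. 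These are $k-1$ sets covering everything except the leaf block $\tilde{S}_{v_1}$, whose size is $\ell$ by \eqref{eq:setsize}.

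For part~(1), suppose $b(v_\pi)=1$. Then by construction player $P_1$ holds the set $S_{v_1} = \tilde{S}_{v_1}$. Adding it to the $k-1$ telescoping path sets yields a family of $k$ sets whose union is all of $\mc{X}$, so $f = Q$, which settles the first item.

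For part~(2), suppose $b(v_\pi)=0$, so $P_1$ receives nothing from the leaf $v_1$. The key step is to show that the only sets in the entire instance $\mc{I}(\pi)$ that meet the block $\tilde{S}_{v_1}$ are the singletons held by $P_k$. I would argue this by inspecting each set of the form $S_w = \tilde{S}_w \setminus \tilde{S}_{\pi(w)}$: if $w$ lies on the path then $\tilde{S}_{v_1} \subseteq \tilde{S}_{\pi(w)}$ by nesting, so $S_w$ is disjoint from $\tilde{S}_{v_1}$ (and the leaf set itself is absent since $b(v_1)=0$); if $w$ is off the path then $w$ is neither an ancestor nor a descendant of the leaf $v_1$, so Observation~\ref{obs:dis} gives $\tilde{S}_w \cap \tilde{S}_{v_1} = \emptyset$ and hence $S_w \cap \tilde{S}_{v_1} = \emptyset$. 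Thus no non-singleton set intersects $\tilde{S}_{v_1}$. On the other hand, since $\tilde{S}_{v_1} \subseteq \tilde{S}_{\pi(r)}$, each of its $\ell$ elements is covered by a distinct singleton of $P_k$. Therefore covering all of $\tilde{S}_{v_1}$ — which is necessary to reach $Q = |\mc{X}|$ — forces at least $\ell$ distinct singletons, so any family of fewer than $\ell$ sets must miss an element of $\tilde{S}_{v_1}$ and hence have $f < Q$.

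I expect the main obstacle to be the exhaustive case analysis in part~(2): one must verify that \emph{no} set other than a singleton can touch the leaf block, and this is precisely where the nesting of the path sets and the disjointness of Observation~\ref{obs:dis} are used in tandem. Once that dichotomy is established, the counting argument that $\ell$ mutually disjoint singletons are unavoidable is routine, and part~(1) is comparatively easy, amounting to the telescoping identity together with the availability of $\tilde{S}_{v_1}$ when $b(v_\pi)=1$.
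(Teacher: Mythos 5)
Your proof is correct and follows essentially the same route as the paper's: part (1) via the telescoping union of the path sets $S_{v_i} = \tilde{S}_{v_i}\setminus\tilde{S}_{v_{i-1}}$ together with $S_{v_1}=\tilde{S}_{v_1}$, and part (2) by showing no non-singleton set meets $\tilde{S}_{v_1}$ — using Observation~\ref{obs:dis} off the path and the child-exclusion/nesting property on the path — so that at least $\ell$ singletons are forced. The only cosmetic difference is that you make the nesting $\tilde{S}_{v_1}\subseteq\cdots\subseteq\tilde{S}_{v_k}$ explicit where the paper appeals to the ``child-exclusion nature'' of the construction.
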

\begin{proof} 
Let $v_{k},v_{k-1},\dots,v_{1}$ be the unique root-to-leaf path resulting from $\pi$.

We start with the first case, i.e., $b(v_{\pi}) = 1$. In this case, consider the sets $S_{v_{k}},S_{v_{k-1}},\dots, S_{v_1}$. It follows from the definition of the sets according to $\pi$ that 
\begin{center}
\begin{tabular}{ l c l }
 $S_{v_{k}}$ & = & $\tilde{S}_{v_{k}} \backslash \tilde{S}_{v_{k-1}} \,\,\,\,= \mc{X}\backslash \tilde{S}_{v_{k-1}}$ \\
$S_{v_{k-1}}$ & = & $\tilde{S}_{v_{k-1}} \backslash \tilde{S}_{v_{k-2}}$ \\
  \vdots & = & \vdots\\
$S_{v_{2}}$ & = & $\tilde{S}_{v_{2}} \backslash \tilde{S}_{v_{1}}$ \\
\end{tabular}
\end{center}
where $\tilde{S}_{v_{1}} = S_{v_1}$ since $b(v_1) = b(v_\pi)=1$ in $\pi$. It then follows that \begin{align*}
\bigcup_{i=1}^{k} S_{v_i}= \mc{X}
\end{align*}
and hence for $\mc{S} = \left\{S_{v_{k}},S_{v_{k-1}},\dots, S_{v_1}\right\}$,  we get $f(\mc{S}) = Q$. 

Now for the second case, also consider $v_{k},v_{k-1},\dots,v_{1}$ the unique root-to-leaf path resulting from $\pi$. In this case, since $b(v_\pi) = b(v_{1}) = 0$, $S_{v_{1}} = \emptyset$. Note however  that $\tilde{S}_{v_1}$ contained $\ell$ elements. We claim that for any non-singleton set $S \in \mc{S}$, $S\cap \tilde{S}_{v_1} = \emptyset$. To see this, observe that for any vertex $v$, $S_v \subseteq \tilde{S}_v$, and hence we never add to $S_v$ an element that was not already in $\tilde{S}_v$. Thus, Observation~\ref{obs:dis} yields that for any $S_u \in \mc{S}$ such that $u$ is not an ancestor of $v_1$, $S_u\cap \tilde{S}_{v_1}= \emptyset$. It remains to show that $S_{v_{i}} \cap \tilde{S}_{v_1} =\empty \emptyset$ for all $2 \leq i \leq k$, i.e., for the set associated with the vertices on the unique root-to-leaf path. But this easily follows from the \emph{child-exclusion nature} of our construction 
, that yields that $\tilde{S}_{v_1}$ does is disjoint from any set on the path leading from the root to $v_1$.

This says that in order to cover the elements of $\tilde{S}_{v_1}$, we need to include all the required singletons (recall that these singletons must be in $\mc{S}_k$), and hence since $|\tilde{S}_{v_1}| = \ell$, we need to include at least $\ell$ singleton sets, which yields the second part of the claim.
\end{proof}

In our argument for the second case, we assumed that we can cover $\mc{X}\backslash S_{v_1}$ \emph{for free}. One can refine the analysis to show that in this case, any family of sets $\tilde{\mc{S}}$ that achieves $f(\tilde{S})\geq Q$ must have a size of at least $\ell + k -1$. For our purposes, we think of $\ell >> k$, and hence a gap between $k$ and $\ell$ is enough for the main result of this section.

\subsection{Lower bound for the Submodular Cover Problem}
\label{sec:actuallb}
We are now ready to prove the main theorem of this section. \begin{theorem}
\label{thm:hardness}
For any number of passes $p$ and any stream size $m$, a $p$-pass streaming algorithm that, with probability at least 2/3, approximates the submodular cover problem to a factor smaller than $\frac{m^{\frac{1}{p}}}{p+1}$ must use at least $\Omega\left( \frac{m^{\frac{1}{p}}}{p(p+1)^2}\right)$.
\end{theorem}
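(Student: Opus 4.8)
The plan is to prove the theorem by reducing the communication problem \textsc{MPJ}${}_{T,k}$ to the task solved by a $p$-pass streaming algorithm, using the (set cover, hence submodular cover) instance $\mc{I}(\pi)$ constructed in \prettyref{sec:reduction}. Fix a complete $t$-ary tree $T$ with $k=p+1$ layers and set the leaf-block size to $\ell=t$. By \prettyref{clm:gap}, the optimum cover size is at most $k$ when $b(v_\pi)=1$ and at least $\ell$ when $b(v_\pi)=0$. Consequently, any algorithm that approximates the submodular cover to a factor strictly smaller than $\ell/k$ can distinguish the two cases: in the first case it must output a cover of size at most $(\ell/k)\cdot k<\ell$, while in the second case every valid cover has size at least $\ell$. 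Thresholding the reported solution size at $\ell$ therefore recovers \textsc{MPJ}${}_{T,k}(\pi)=b(v_\pi)$.

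Next I would turn a $p$-pass, memory-$M$ streaming algorithm $\mc{A}$ into an $r$-round communication protocol for \textsc{MPJ}${}_{T,k}$ with $r=p$. Partition the stream so that player $P_i$ feeds exactly the sets $\mc{S}_i$ of its layer, which $P_i$ can build from its private pointers (or bits) together with the publicly known tree structure and universe partition. In each pass the players run $\mc{A}$ in the order $P_1,P_2,\dots,P_k$, each one resuming $\mc{A}$ from the memory state handed over by its predecessor and writing the updated state (at most $M$ bits) on the blackboard for the next player. A single pass over the stream is thus one round of the protocol, so $p$ passes give $p=k-1$ rounds; since there are $k-1=p$ memory hand-offs per pass, the communication per round is $O(pM)$. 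After the final pass the comparison of the reported cover size against the threshold $\ell$ yields the output bit. Because $\mc{A}$ succeeds with probability at least $2/3$, the protocol errs with probability at most $1/3$, i.e.\ it is a $[p,O(pM),1/3]$-protocol.

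If $\mc{A}$ had approximation ratio strictly below $\ell/k=t/(p+1)$, this protocol would compute \textsc{MPJ}${}_{T,k}$ in $p=k-1$ rounds, so \prettyref{thm:COMhard} forces its per-round communication to be $\Omega(t/k^2)=\Omega(t/(p+1)^2)$. Comparing with the $O(pM)$ bound gives
\[
pM=\Omega\!\left(\frac{t}{(p+1)^2}\right),\qquad\text{hence}\qquad M=\Omega\!\left(\frac{t}{p(p+1)^2}\right).
\]
Finally I would translate $t$ into the stream length $m$. With $k=p+1$ and $\ell=t$, the set count of \prettyref{sec:reduction} gives $m\le \frac{t^{p+1}-1+t^p(t-1)}{t-1}=\Theta(t^p)$, so $t=\Theta(m^{1/p})$. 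Substituting this into the displayed bound yields a memory lower bound of $\Omega\big(m^{1/p}/(p(p+1)^2)\big)$ against any algorithm whose approximation ratio is below $\Theta(m^{1/p}/(p+1))$, which is exactly \prettyref{thm:thm1}.

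The step I expect to be the main obstacle is the streaming-to-communication simulation: one must verify that each player constructs its own portion of the instance from its private input alone, that exactly one round is consumed per pass with hand-off cost $O(M)$ per player (so that the per-round total is $O(pM)$ and matches the $[r,C,\epsilon]$ accounting that produces the $p(p+1)^2$ factor rather than $(p+1)^3$), and that the internal randomness of $\mc{A}$ is handled so the protocol's error stays at $1/3$. The parameter balancing $\ell=\Theta(t)$ is also delicate, since it is what pins the exponent of $m$ to $1/p$; a larger $\ell$ would shift the trade-off and alter both the ruled-out approximation factor and the memory bound.
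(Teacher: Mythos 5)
Your proposal is correct and follows essentially the same route as the paper: the same reduction from \textsc{MPJ}${}_{T,p+1}$ via the instance $\mc{I}(\pi)$ with $\ell=t$, the same use of Claim~\ref{clm:gap} to turn the $k$-vs-$\ell$ gap into a distinguishing task, the same pass-by-pass simulation with memory hand-offs giving a $p$-round protocol with $O(pM)$ communication per round, and the same appeal to Theorem~\ref{thm:COMhard} followed by the substitution $t=\Theta(m^{1/p})$. The subtleties you flag (players building their chunks privately, the per-round accounting that yields $p(p+1)^2$ rather than $(p+1)^3$, and the choice $\ell=t$ pinning the exponent) are exactly the points the paper's proof handles, so no gap remains.
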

\begin{proof}
Let $\textsc{MPJ}{}_{T,p+1}$ be an instance of the Multi-Player Pointer Jumping Problem over a complete $t$-ary tree $T$ with $p+1$ levels, and let $\pi$ be the input to the problem spread amongst the $p+1$ players $P_1,P_2,\dots,P_{p+1}$ as discussed in Section~\ref{sec:MPJ}. We now construct the instance $\mc{I}(\pi)$ of $\textsc{SMC}_{t,p+1,\ell}$ for some integer $\ell\geq t$ over $m$ sets as described in Section~\ref{sec:reduction}. Recall that \begin{align*}
m\leq \frac{t^k -1 + \ell t^{k-2}(t-1)}{t-1}
\end{align*}
and $k = p+1$ in our case.

It follows from Claim~\ref{clm:gap} that distinguishing between the case when $b(v_\pi)=1$ and $b(b_\pi) = 0$ is equivalent to distinguishing whether the minimal set cover $\mc{S}^*$ of $\textsc{SMC}_{t,p+1,\ell}$ is of size at most $p+1$, or at least $\ell$. 

Consider a $p$-pass streaming algorithm \textsc{Alg} that approximates $\textsc{SMC}_{t,p+1,\ell}$ using memory $M$, to a factor smaller than $\frac{\ell}{p+1}$, where the stream consist of the sets of $P_1$ followed by those of $P_2$ and so on. We will use \textsc{Alg} to design the following an $[p,(p+1)M,1/3]$-protocol \textsc{Prtcl} for \textsc{MPJ${}_{T,k}$} as follows: \begin{itemize}
\item In each round $i=1,\dots,p$, we emulate the $i^{th}$ pass of \textsc{Alg} on the stream; When \textsc{Alg} processes the last set corresponding to $P_1$ in the stream, the content of the memory is broadcasted to all the players. Then we do the same after \textsc{Alg} finishes $P_2$'s chunk on the stream, and so on up to $P_{p+1}$, in that order.
\end{itemize}
Since \textsc{Alg} approximates the size of $\mc{S}^*$ for $\mc{I}(\pi)$ to a factor smaller than $\frac{\ell}{p+1}$ with probability at least 2/3, then \textsc{Prtcl} outputs  $\textsc{MPJ}{}_{T,k}(\pi)$ with probability at least $2/3$. Recall that the game $\textsc{MPJ}{}_{T,p+1}$ is played among $p+1$ players, and we know from Theorem~\ref{thm:COMhard} that $R^{p}(\textsc{MPJ}{}_{T,p+1}) = \Omega \left(\frac{t}{(p+1)^2} \right)$, hence $M$ must be at least $M = \Omega\left( \frac{t}{p(p+1)^2}\right)$.

It remains to relate the approximation ratio and the required memory size, to the size of the stream $m$. Recall that: \begin{align*}
m\leq \frac{t^k -1 + \ell t^{k-2}(t-1)}{t-1}
\end{align*}
For $\ell = t$, we get that $m \leq \frac{2t^{k} - t^{k-1}}{t-1} \leq 2t^{k-1} = 2t^{p}$, and equivalently $\ell = t \approx m^{1/p}$. Thus we get that any $p$-pass streaming algorithm that, with probability at least $2/3$, approximates the submodular cover problem to a factor smaller than $\frac{m^{\frac{1}{p}}}{p+1}$ must use at least $\Omega\left( \frac{m^{\frac{1}{p}}}{p(p+1)^2}\right)$ memory.
\end{proof}

\section{Example Applications}\label{sec:examples}
Many real-world problems, such as data summarization \cite{tschiatschek2014learning},  image segmentation \cite{kim2011distributed}, influence maximization in social networks \cite{kempe2003maximizing}, can be formulated as a submodular cover problem and can benefit from the streaming setting. In this section, we discuss two such concrete applications.

\subsection{Active set selection}\label{sec:ex1}

%
%
To scale kernel methods (such as kernel ridge regression, Gaussian processes, etc.) to large data sets, we often rely on active set selection methods \cite{Rasmussen:2005:GPM:1162254}. For example, a significant problem with Gaussian process prediction is that it scales as $\mathcal{O}(n^3)$. Storing the kernel matrix $K$ and solving the associated linear system is prohibitive when $n$ is large. One way to overcome this is to select a small subset of data while maintaining a certain diversity. 
A popular approach for active set selection is Informative Vector Machine (IVM) \cite{seeger2004greedy}, where
the goal is to select a set $S$ that maximizes the utility function
\begin{align}
f(S) = \frac{1}{2} \log \det (I + \sigma^{-2} K_{S,S}), \label{IVMobj}.
\end{align}  Here, $K_{S,S}$ is the submatrix of $K$, corresponding to rows/columns indexed by $S$, and $\sigma > 0$ is a regularization parameter. This utility function is monotone submodular, as shown in \cite{krause2012submodular}.

\subsection{Graph set cover}\label{sec:ex2}
In a lot of applications, e.g., influence maximization in social networks \cite{kempe2003maximizing}, community detection in graphs~\cite{fortunato2010community}, etc., we are interested in selecting a small subset of vertices from a massive graph that ``cover'' in some sense a large fraction of the graph. \\
In particular, in section \ref{sec:Experiments}, we consider two fundamental set cover problems: Dominating Set and Vertex Cover problems. Given a graph $G(V,E)$ with vertex set $V$ and edge set $E$, let $\rho(S)$ denote the neighbours of the vertices in $S$ in the graph, and $\delta(S)$ the edges in the graph connect to a vertex in $S$. The dominating set is the problem of selecting the smallest set that covers the vertex set $V$, i.e., the corresponding utility is $f(S) =|\rho(S) \cup S|$. The vertex cover is the problem of selecting the smallest set that covers the edge set $E$, i.e., the corresponding utility is $f(S) =|\delta(S)|$. Both utilities are monotone submodular functions.

\section{Experimental Results} \label{sec:Experiments}

We address the following questions in our experiments:

\begin{enumerate}

\item How does \ALGNAME perform in comparison to the offline greedy algorithm, in terms of solution size and speed?

\item How does $\alpha$ influence the trade-off between solution size and speed ? 

\item How does \ALGNAME scale to massive data sets?
\end{enumerate}

We evaluate the performance of \ALGNAME on real-world data sets with two applications: active set selection and graph set cover problems, described in section \ref{sec:examples}. For active set selection, we choose a dataset having a size that permits the comparison with the offline greedy algorithm. For graph cover, we run \ALGNAME on a large graph of $787$ million nodes and $47.6$ billion edges.

We measure the computational cost in terms of the number of oracle calls, which is independent of the concrete implementation and platform. 

\subsection{Active Set Selection for Quantum Mechanics}
In quantum chemistry, computing certain properties, such as atomization energy of molecules, can be computationally challenging \cite{rupp2015machine}. In this setting, it is of interest to choose a small and diverse training set, from which one can predict the atomization energy (e.g., by using kernel ridge regression) of other molecules.
 
In this setting, we apply \ALGNAME on the log-det function defined in Section \ref{sec:ex1} where we use the Gaussian kernel $K_{ij}= \exp(-\frac{\| x_i -x_j \|_2^2}{2 h^2})$,  and we set the hyperparameters as in \cite{rupp2015machine}: $\sigma = 1, h = 724$. The dataset consists of 7k small organic molecules, each represented by a $276$ dimensional vector.  We set $M = 2^{15}$ and vary $Q$ from $\frac{f(V)}{2}$ to $\frac{3 f(V)}{4}$, and $\alpha$ from $1.1$ to $2$. 

We compare against offline greedy, and its accelerated version with lazy updates (Lazy Greedy)\cite{minoux1978accelerated}.
For all algorithms, we provide a vector of different values of $\tilde{\epsilon}$ as input, and terminate once the utility $(1-\tilde{\epsilon}) Q$, corresponding to the smallest $\tilde{\epsilon}$, is achieved. Below we report the performance for the smallest and largest tested $\tilde{\epsilon}=0.01$ and $\tilde{\epsilon} = 0.5$, respectively.

In Figure~\ref{fig:QMcomparison}, we show the performance of \ALGNAME with respect to the offline greedy and lazy greedy, in terms of size of solutions picked and number of oracle calls made.  The computational costs of all algorithms are normalized to those of offline greedy.

\begin{figure}
\centering
\vspace{-40pt}
\begin{tabular}{c c c}
\label{fig:QMcomparison}
\hspace{-40pt}
\includegraphics[scale=.28]{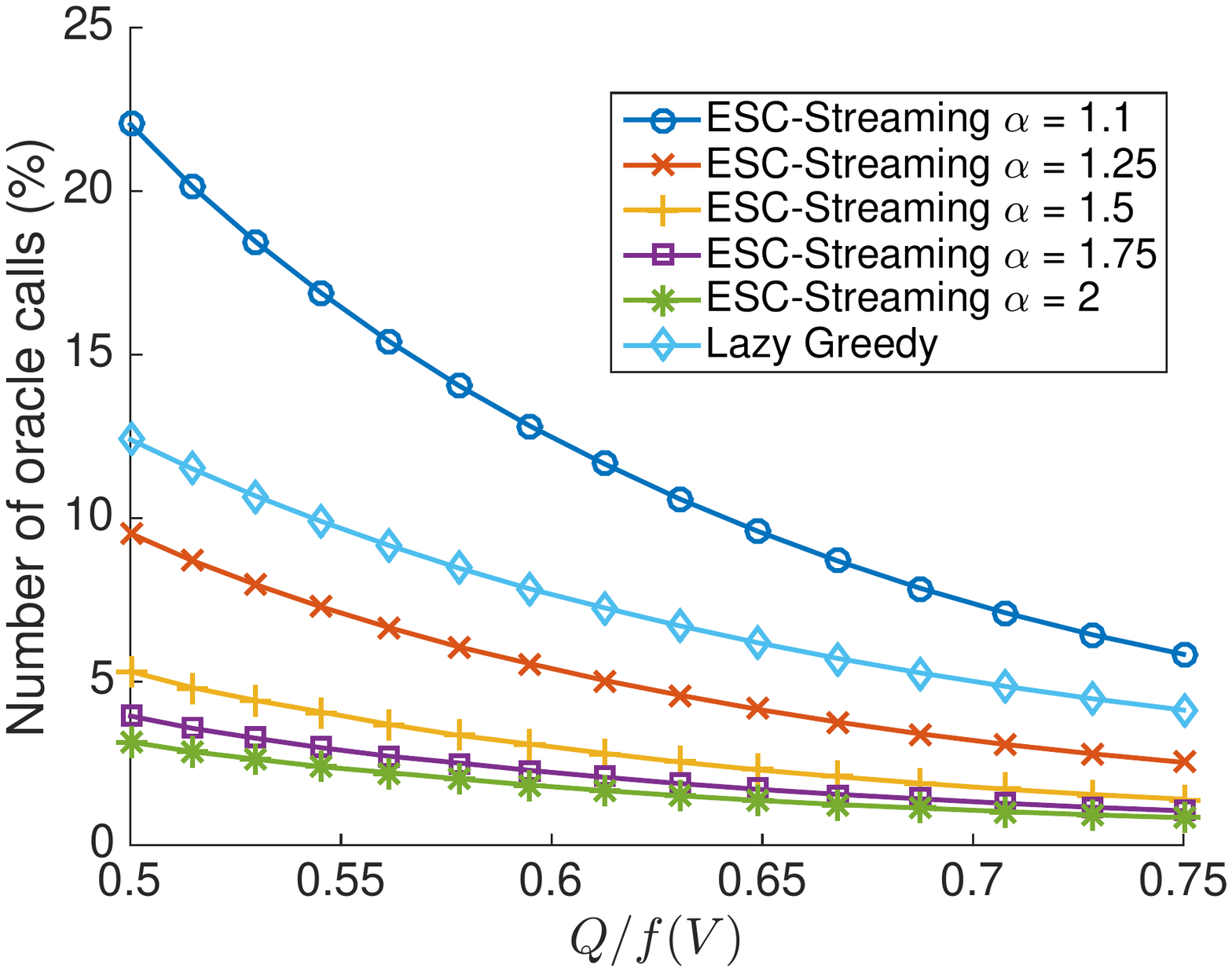} & \hspace{-40pt} 
\includegraphics[scale=.28]{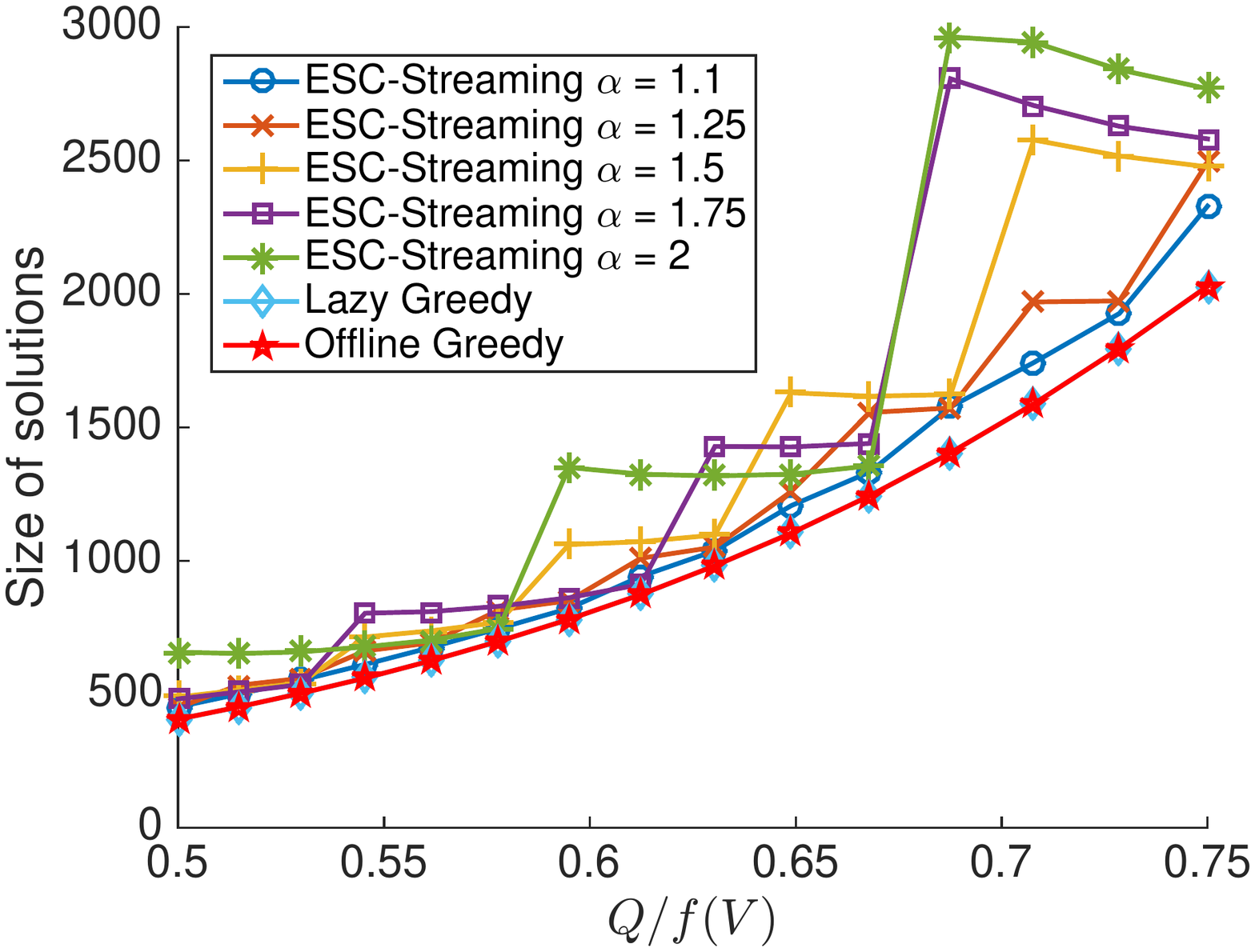} & \hspace{-40pt}
\includegraphics[scale=.28]{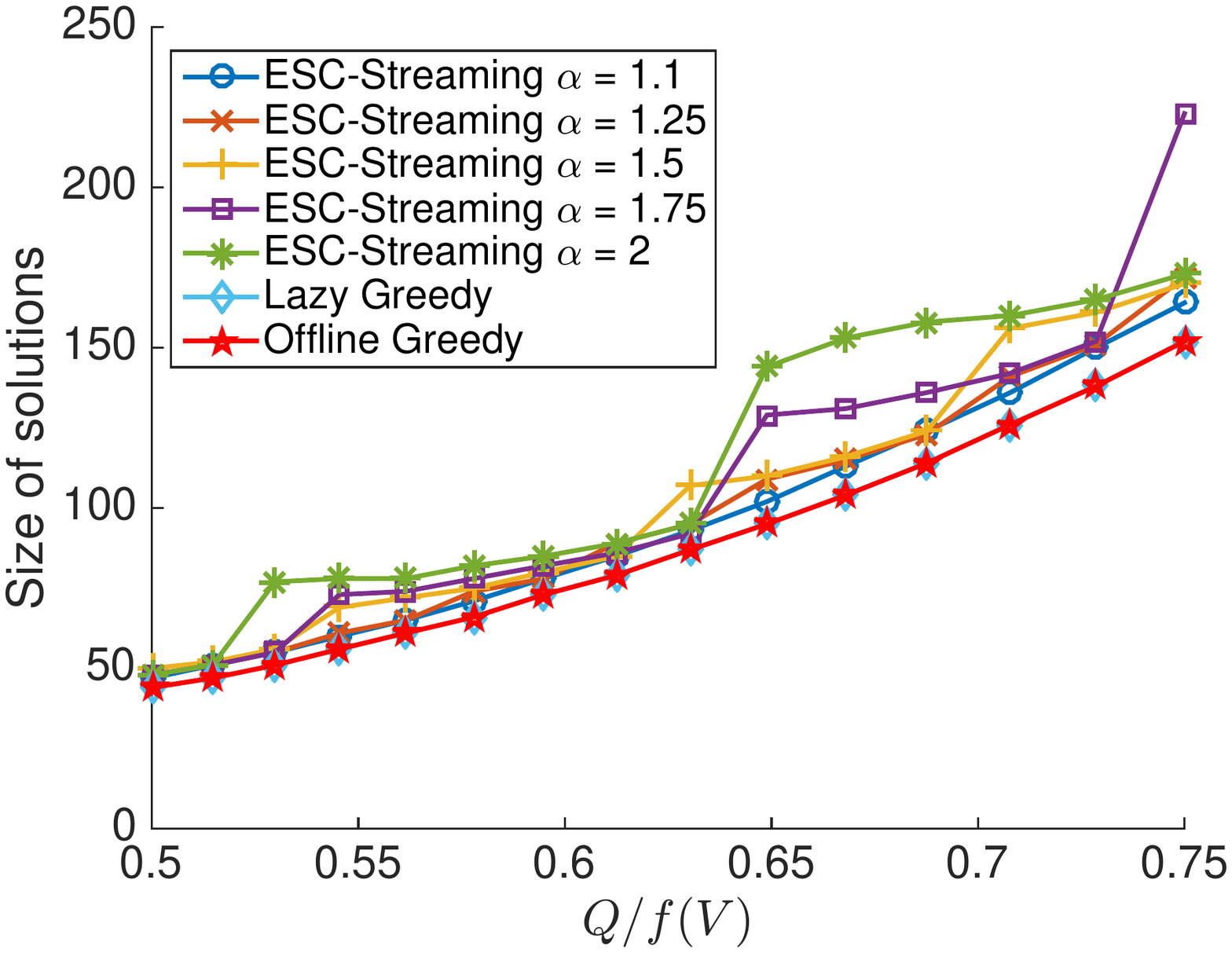} 
\end{tabular}
\vspace{-50pt}
\caption{Active set selection of molecules: (Left) Percentage of oracle calls made relative to offline greedy, (Middle) Size of selected sets for $\epsilon = 0.01$, (Right) Size of selected sets for $\epsilon = 0.5$. }
\end{figure}
It can be seen that standard \ALGNAME, with $\alpha = 2$, always chooses a set at most twice (largest ratio is $2.1089$) as large as offline greedy, using at most $3.15 \%$ and $25.5 \%$ of the number of oracle calls made, respectively, by offline greedy and lazy greedy.  As expected, varying the parameter $\alpha$ leads to smaller solutions at the cost of more oracle calls: $\alpha = 1.1$ leads to solutions roughly of the same size as the solutions found by the offline greedy. Note also that choosing larger values $\alpha$ leads to jumps in the solution sets sizes (c.f., \ref{fig:QMcomparison}). In particular, varying the required utility $Q$, even by a small amount, may not be possible to achieve by the current solution’s size $(\alpha^j)$ and would require moving to a set larger by at least an $\alpha$ factor $(\alpha^{j+1})$. 

Finally, we remark that even for this small dataset, offline greedy, for largest tested $Q$, required $1.2\times 10^7$ oracle calls, and took almost $2$ days to run on the same machine.

\subsection{Cover problems on Massive graphs}

To assess the scalability of \ALGNAME, we apply it to the "uk-2014" graph, a large snapshot of the .uk domain taken at the end of 2014 \cite{BoVWFI,BRSLLP,BMSB}. It consists of $787{,}801{,}471$ nodes and $47{,}614{,}527{,}250$ edges. This graph is sparse, with average degree $60.440$, and hence requires large cover solutions. Storing this dataset (i.e., the adjacency list of the graph) on the hard-drive requires more than 190GB of  memory.

We solve both the Dominating Set and Vertex Cover problems, whose utility functions are defined in Section \ref{sec:examples}.
For the Dominating Set problem, we set $M=520$ MB, $\alpha = 2$ and $Q = 0.7 |V|$. We run the first phase of \ALGNAME (c.f., Algorithm~\ref{algo:bicriteria2}), then query for different values of $\tilde{\epsilon}$ between 0 to 1, using Algorithm \ref{algo:query}. 
Similarly, for the Vertex Cover problem, we set $M=320$ MB, $\alpha = 2$ and $Q = 0.8 |E|$. 
Figure \ref{fig:GraphComparison} shows the performance of \ALGNAME on both the dominating set and vertex cover problems, in terms of utility achieved, i.e., number vertices/edges covered, for all the feasible $\tilde{\epsilon}$ values, with respect to the size of the subset of vertices picked. 

As a baseline, we compare against a random selection procedure, that picks a random permutation of the vertices and then select any vertex with a non-zero marginal, until it reaches the same partial cover achieved by \ALGNAME. Note that the offline greedy, even with lazy evaluations, is not applicable here since it does not terminate in a reasonable time, so we omit it from the comparison. Similarly, we do not compare against the Emek–Ros\'en's algorithm~\cite{ER14}, due to its large memory requirement of $n \log m$, which in this case is roughly 20 times bigger than the memory used by \ALGNAME.

We do significantly better than a random selection, especially on the Vertex Cover problem, which for sparse graphs is more challenging than the Dominating Set problem.

 \begin{figure}\label{fig:GraphComparison}
\centering
\vspace{-40pt}
\begin{tabular}{c c c c}\hspace{-40pt}
\includegraphics[scale=.28]{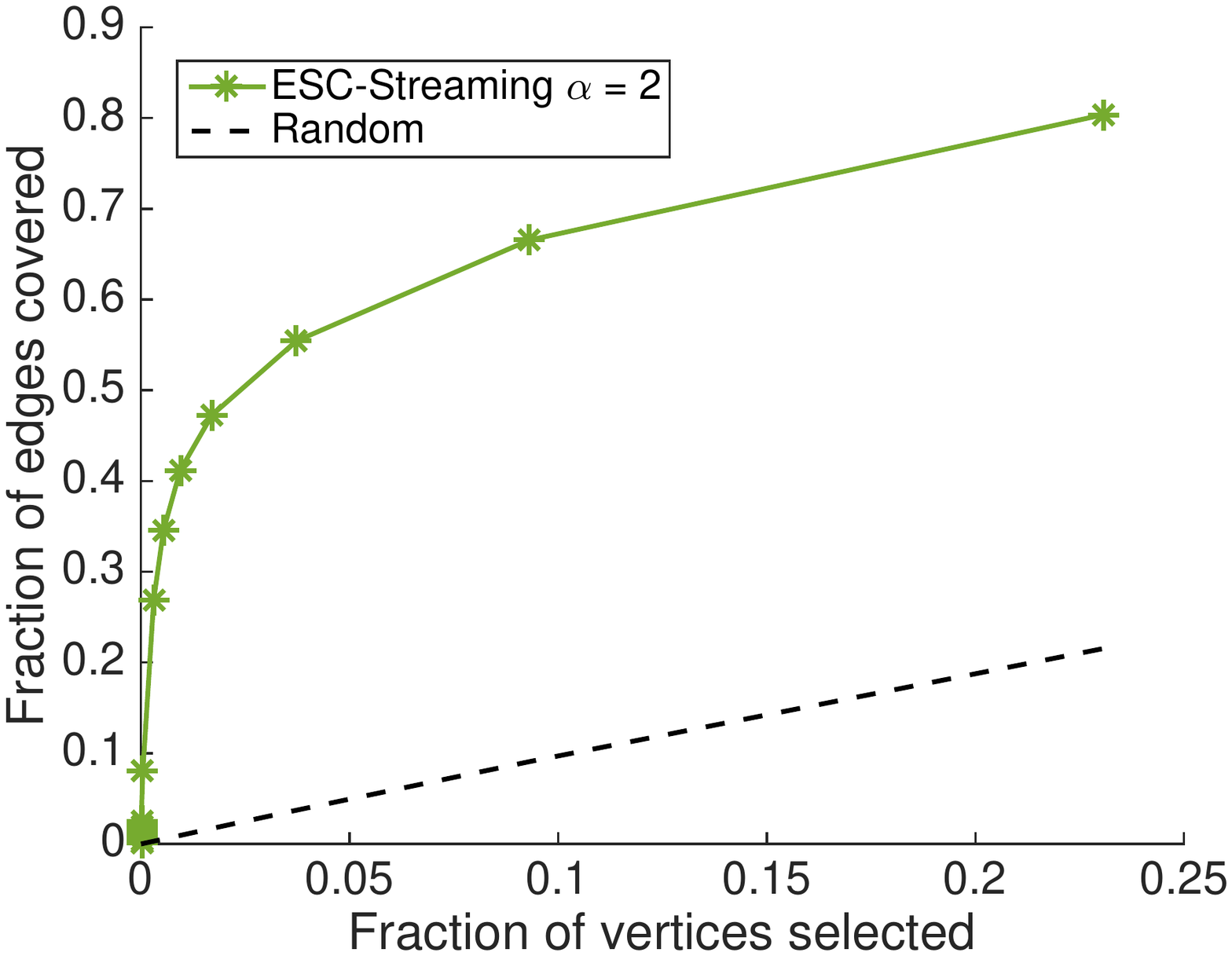} & \hspace{-40pt} 
\includegraphics[scale=.28]{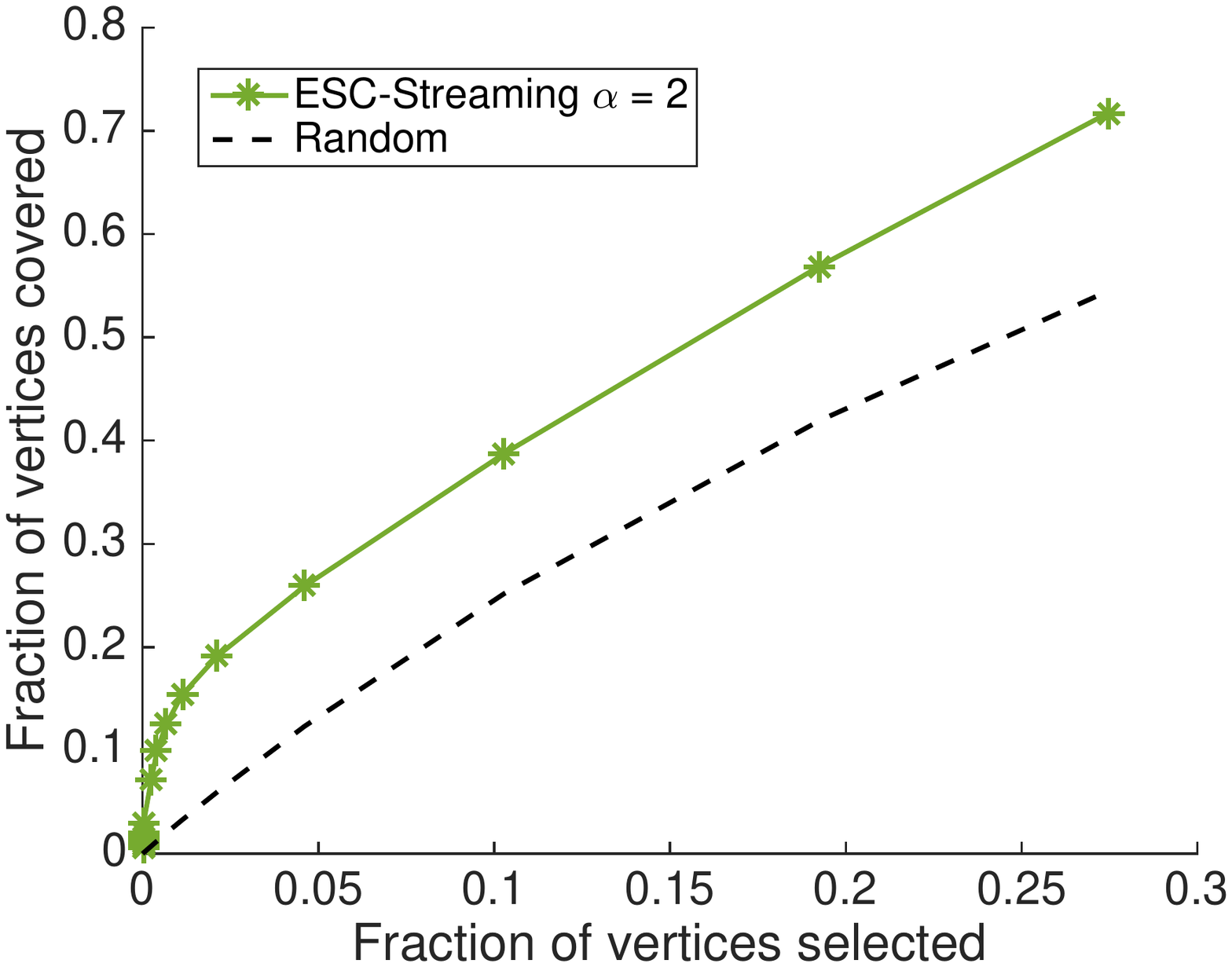} & \hspace{-40pt}
\includegraphics[scale=.28]{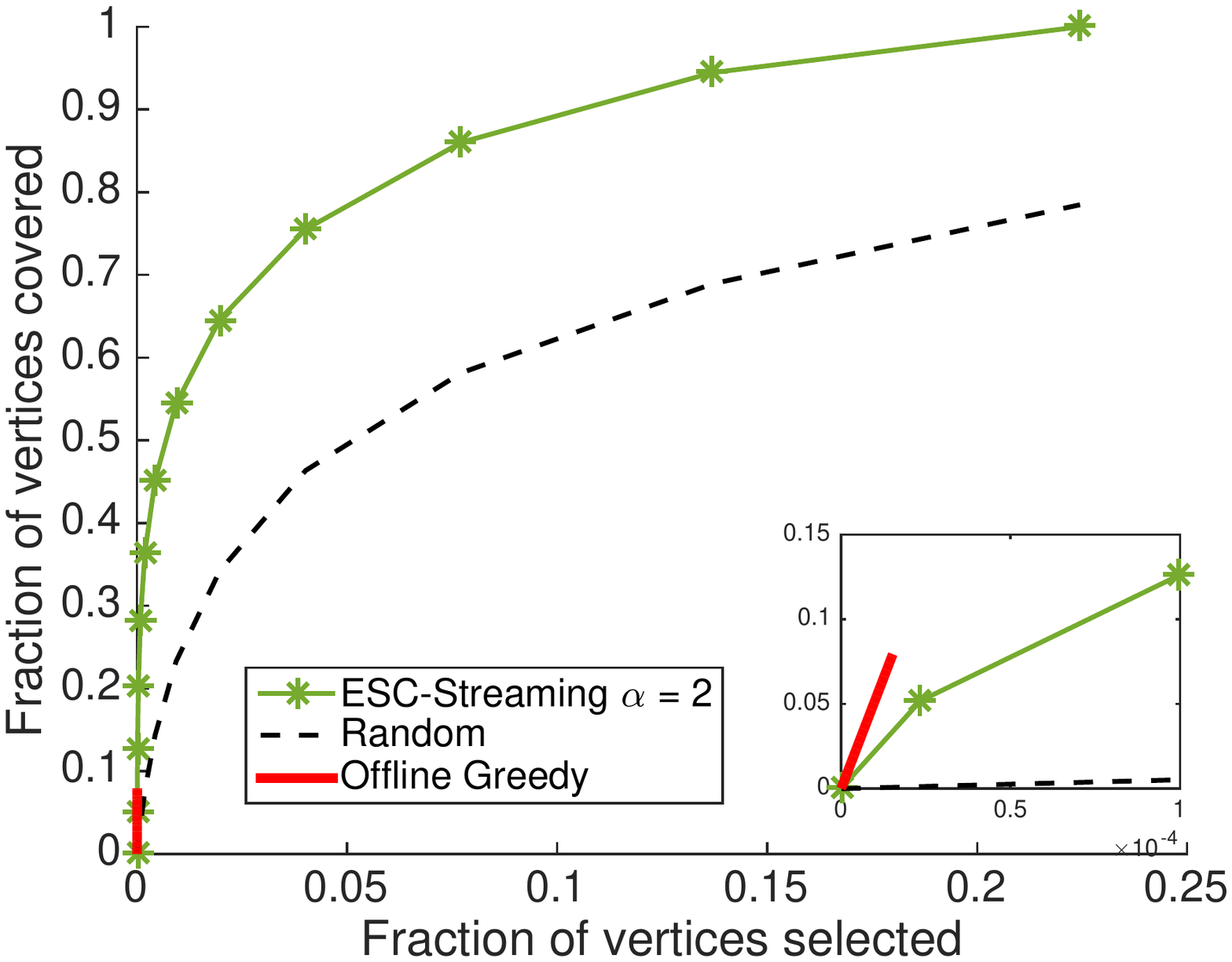}
\end{tabular}
\vspace{-50pt}
\caption{  (Left) Vertex cover on "uk-2014" (Middle) Dominating set on "uk-2014" (Right) Dominating set on "Friendster"}
\end{figure}

Since running the greedy algorithm on ``uk-2014'' graph goes beyond our computing infrastructure, we include another instance of the Dominating set cover problem on a smaller graph ``Friendster", an online gaming network~\cite{friendster}, to compare with offline greedy algorithm. This graph has $65.6$ million nodes, and $1.8$ billion edges. The memory required by \ALGNAME is less than 30MB for $\alpha = 2$. We let offline greedy run for 2 days, and gathered data for $2000$ greedy iterations. Figure \ref{fig:GraphComparison} (Right) shows that our performance almost matches the greedy solutions we managed to compute.

\section{Conclusion}
In this paper, we consider the SC problem in the streaming setting, where we select the least number of elements that can achieve a certain utility, measured by a submodular function. We prove that there cannot exist any single pass streaming algorithm that can achieve a non-trivial approximation of SSC, using sublinear memory, if the utility have to be met exactly. 
Consequently, we develop an efficient approximation algorithm, \ALGNAME, which finds solution sets, slightly larger than the optimal solution, that partially cover the desired utility. 
 We rigorously analyzed the approximation guarantees of \ALGNAME, and compared these guarantees against the offline greedy algorithm. 
 We demonstrate the performance of \ALGNAME on real-world problems. We believe that our algorithm is an important step towards solving streaming and large scale submodular cover problems, which lie at the heart of many modern machine learning applications.

\section*{Acknowledgements}
We would like to thank Michael Kapralov and Ola Svensson for useful discussions.
This work was supported in part by the European Commission under ERC Future Proof, SNF 200021-146750, SNF CRSII2-147633, NCCR Marvel, and ERC Starting Grant 335288-OptApprox.

\end{document}